\newtheorem{theorem}{Theorem}[]
\newtheorem{lemma}{Lemma}[]
\def\BibTeX{{\rm B\kern-.05em{\sc i\kern-.025em b}\kern-.08em
    T\kern-.1667em\lower.7ex\hbox{E}\kern-.125emX}}
    \newcolumntype{M}[1]{>{\centering\arraybackslash}m{#1}}
\begin{document}
\newpage
\title{\fontsize{22.8}{27.6}\selectfont Handover Management in UAV Networks with Blockages}


\author{Neetu R.R, Gourab Ghatak and Vivek Ashok Bohara\thanks{Neetu R.R. and Vivek Ashok Bohara are affiliated with the Department of Electronics and Communication Engineering, IIIT-Delhi, India. (Email: {neetur, vivek.b}@iiitd.ac.in). Gourab Ghatak is with the Department of Electrical Engineering, IIT Delhi, India. (Email: gghatak@ee.iitd.ac.in). This research is funded by the IIT Palakkad Technology IHub Foundation Doctoral Fellowship IPTIF/HRD/DF/026.
}}

\acrodef{PPP}[PPP]{Poisson point process} 
\acrodef{BPP}[BPP]{Binomial point process}
\acrodef{HPPP}[HPPP]{homogeneous Poisson point process}
\acrodef{SBS}[SBS]{small base station} 
\acrodef{CSP}[CSP]{conditional success probability} 
\acrodef{A2G}[A2G]{air-to-ground}
\acrodef{TBS}[TBS]{terrestrial base station}
\acrodef{BS}[BS]{base station}
\acrodef{MBS}[MBS]{macro base station}
\acrodef{CDF}[CDF]{cumulative distribution function}
\acrodef{CCDF}[CCDF]{complementary-cdf}
\acrodef{SINR}[SINR]{signal-to-interference noise ratio}
\acrodef{MCP}[MCP]{Matérn cluster process}
\acrodef{QoS}[QoS]{quality of service}
\acrodef{QoE}[QoE]{quality of experience}
\acrodef{UDN}[UDN]{ultra-dense network}
\acrodef{UAV}[UAV]{unmanned-aerial vehicle}
\acrodef{ASE}[ASE]{area spectral efficiency}
\acrodef{URLLC}[URLLC]{ultra-reliable low-latency communications}
\acrodef{MPPP}[MPPP]{marked-Poisson point process}
\acrodef{PCP}[PCP]{poisson cluster process}
\acrodef{HO}[HO]{handover}
\acrodef{UE}[UE]{user equipment}
\acrodef{MD}[MD]{meta distribution}
\acrodef{RSRP}[RSRP]{reference signal received power}
\acrodef{PGFL}[PGFL]{probability generating functional}
\acrodef{RSS}[RSS]{received signal strength}
\acrodef{PDF}[PDF]{probability density function}
\acrodef{CSI}[CSI]{channel state information}
\acrodef{VT}[VT]{vehicle terminal}
\acrodef{MT}[MT]{mobile terminal}
\acrodef{5G}[5G]{fifth-generation}
\acrodef{3D}[3D]{3-dimension} 
\acrodef{UAV}[UAV]{unmanned aerial vehicle}
\acrodef{AV}[AV]{aerial vehicle}
\acrodef{UAV-BS}[UAV-BS]{UAV-base station}
\acrodef{UAV-AP}[UAV-AP]{UAV-access point}
\acrodef{LoS}[LoS]{line-of-sight}
\acrodef{MD}[MD]{meta distribution}
\acrodef{UAV}[UAV]{unmanned aerial vehicle}
\acrodef{CCDF}[CCDF]{complementary cumulative distribution
function}
\acrodef{RSSI}[RSSI]{received signal strength indicator}
\acrodef{NLoS}[NLoS]{non-line-of-sight}
\acrodef{HetNet}[HetNet]{heterogeneous network}
\acrodef{IoT}[IoT]{Internet of Things}
\acrodef{3-D}[3-D]{three-dimensional}
\acrodef{2-D}[2-D]{two-dimensional}
\acrodef{1-D}[1-D]{one-dimensional}
\acrodef{EC}[EC]{edge computing}
\acrodef{MLD}[MLD]{mean local delay}
\maketitle

\begin{abstract}
We investigate the performance of \ac{UAV}-based networks in urban environments characterized by blockages, focusing on their capability to support the service demands of mobile users. The \acp{UAV-BS} are modeled using a \ac{2-D} \ac{MPPP}, where the marks represent the altitude of each \ac{UAV-BS}. Leveraging stochastic geometry, we analyze the impact of blockages on network reliability by studying the \ac{MD} of the \ac{SINR} for a specific reliability threshold and the association probabilities for both \ac{LoS} and \ac{NLoS} \acp{UAV-BS}. Furthermore, to enhance the performance of mobile users, we propose a novel cache-based handover management strategy that dynamically selects the cell search time and delays the \ac{RSS}-based \ac{BS} associations. This strategy aims to minimize unnecessary \acp{HO} experienced by users by leveraging caching capabilities at \ac{UE}, thus reducing latency, ensuring seamless connectivity, and maintaining the \ac{QoS}. This study provides valuable insights into optimizing UAV network deployments to support the stringent requirements in the network, ensuring reliable, low-latency, and high-throughput communication for next-generation smart cities.
\end{abstract}

\begin{IEEEkeywords}
UAV, Blockages, SINR Meta distribution, Handover management, Caching, Urban cities
\end{IEEEkeywords}

\section{Introduction}
\lettrine{U}{nmanned} aerial vehicles equipped with remote radio heads (RRHs) function as relays and aerial base stations~\cite{29}. With continuous advancements in drone technology, employing UAVs as airborne \acp{BS} offers a cost-effective and scalable solution to enhance coverage and improve \ac{QoS} for end-users~\cite{12} -\nocite{13} \cite{14}. They can provide on-demand connectivity and extend network coverage to underserved or temporarily congested urban areas. Urban environments, however, present unique challenges for UAV networks, such as blockages, user mobility, and the increasing demand for reliable, low-latency communications~\cite{20}. With the rapid growth of smart cities and applications requiring real-time data processing, such as autonomous driving and mobile gaming, ensuring uninterrupted service is critical~\cite{16}. Addressing these challenges is vital, as they significantly impact the development and effectiveness of next-generation wireless networks. This research focuses on enhancing UAV network reliability and performance in urban settings by improving connectivity in poorly covered areas, meeting growing mobile data demands, and utilizing device caching for better service continuity and efficiency~\cite{15}. These efforts lay the foundation for scalable and adaptive UAV deployments in smart cities. Additionally, performing a thorough blockage analysis during the deployment planning of \acp{UAV-BS} is essential to maintain stable and reliable communication links with ground users. The impact of blockages must also be a primary consideration when developing association strategies in heterogeneous networks~\cite{19}. Despite their potential, studying \ac{UAV} networks in urban environments is inherently complex due to the stochastic nature of blockages, the dynamics of user mobility, and the need to maintain consistent \ac{QoS}~\cite{18}. Integrating device caching capabilities and optimizing UAV placement and handover strategies further adds to this complexity.

Consequently, in
the proposed work, we address the challenges posed by urban blockages in UAV-based cellular networks, which severely impact \ac{LoS} connections essential for smart city applications. By leveraging stochastic geometry, we model these blockages and analyze network reliability through the \ac{MD} of \ac{SINR}, offering a comprehensive evaluation of \ac{QoS}. To address frequent handovers caused by user mobility, we propose a cache-enabled \ac{HO} management scheme that utilizes caching at the \ac{UE} to reduce unnecessary handovers, lower latency, and enhance throughput. The proposed cache-enabled handover management strategy provides a practical approach to ensuring seamless connectivity, meeting stringent QoS requirements, and dynamically adapting to user mobility and blockage scenarios. 

\subsection{Related Works}
Drones as relays and aerial \acp{BS} gain interest due to their flexibility in deployment and flying capabilities\cite{21}. 
Deploying \acp{UAV} in regions lacking \acp{TBS} to enhance capacity and coverage has been broadly investigated in the literature e.g., \cite{22} -\nocite{23} \cite{24}. A number of studies have so far investigated incorporating \acp{UAV} into the network, and many of them have adopted stochastic geometry tools to analyze the network performance~\cite{7}, \cite{25}. The authors in \cite{26} investigated the feasibility of integrating two applications of \acp{UAV}, such as aerial \acp{BS} and data delivery, serving multiple IoT clusters on the ground. They proposed a performance metric, data delivery efficiency, to deliver the data to the clusters in less time. They proposed an algorithm to jointly optimize the minimum round trip time to serve the clusters and maximize the delivered data to the clusters. 

\subsubsection{Blockage Analysis}
 The authors in \cite{1} used the concepts of random shape theory to model the blockages with random sizes, shapes, and orientations. The blockages are considered as \ac{PPP}, and they incorporated the heights of the blockages for the network analysis. They study the behavior of the probability of the \ac{LoS} links and its dependency on the length of the link. They analyzed the coverage probability and achievable rate at the user in the presence of blockages.
In \cite{5}, the authors proposed a power control strategy for urban \ac{UAV} networks that mitigates interference by muting the transmissions of interfering UAVs. The proposed strategy includes a power control coefficient that is dependent on environmental blockage parameters and uses stochastic geometry to derive coverage probability and network connectivity. The authors in \cite{36} study multiple types of blockages in the network, which affect the coverage performance of single-swarm mmWave \ac{UAV} networks. The effect of blockages, such as static, dynamic, and self-blockages, on the system performance are jointly considered. They analyzed the effect of all these blockages on the coverage probability, average \ac{LoS} probability, and average path loss for UAVs by varying the number of \acp{UAV} and heights of the \acp{UAV}. They observed that static blockage and self-blockage exert a predominant influence on these metrics, and the effect of self-blockage can be mitigated by flying the \acp{UAV} at a sufficiently high altitude.

\subsubsection{Fine-Grained Analysis of UAV Networks}
Meta distribution and reliability analysis in traditional cellular networks is studied in various literature~\cite{41}\nocite{42}-\cite{43}. The fine-grained analysis in UAV-enabled networks is performed in \cite{10}, which investigates the \ac{SINR} \ac{MD} performance of \acp{UAV} equipped with either steerable or fixed directional antennas using stochastic geometry. They derived the distribution of the off-boresight angle (OBA) and calculated the moments of the \ac{CSP}. Additionally, they obtained the mean local delay and SINR MD for various environmental scenarios. The study also explored the asymptotic behavior of the association probability and the moments of the \ac{CSP}. Moreover, the authors in \cite{38} analyzed the downlink performance of the UAV network from the perspective of each link other than spatial averaging. The authors consider two types of user distributions, \ac{PPP} and \ac{MCP}, and observed the \ac{MD} for UAV-enabled HetNet and \ac{TBS} only network. The authors the advantages and disadvantages of deploying \acp{UAV} and the height and density at which the \acp{UAV} should be deployed in the network depending on different environmental scenarios. The authors in \cite{39} study the downlink rate meta-distribution in a typical UAV in a cellular-connected UAV network. Considering the Nakagami-m fading for the \ac{LoS} transmissions, the authors derive \ac{LoS} probability and observe the impact of rate threshold and link reliability threshold on rate \ac{MD}. 
\subsubsection{Handover Analysis}
\ac{HO} management using different schemes, e.g., velocity-aware HO management~\cite{46}, route-aware HO management~\cite{45} and skipping alternatively~\cite{43} and other schemes have been discussed in the literature \cite{16} -\nocite{43}\nocite{29}\nocite{45}\nocite{47}\cite{48}. Different from the above literature, the authors in~\cite{27} proposed a periodic handover skipping strategy that introduces a defined skipping period after each handover. They used stochastic geometry to analyze the handover rate and the expected download rate at the user end. In addition, \cite{27} also defined a utility metric, which is a function of the HO rate and achievable download rate, and compared the utility function for different distributions of user velocity. It is observed that when the user's velocity is very high, the proposed strategy outperforms conventional methods. 
In \cite{28}, the authors investigated a \ac{3-D} mobile UAV network and analyzed the handover and coverage probability using stochastic geometry tools. They discussed two types of association strategies of the typical user in the origin, such as strongest average \ac{RSS} association and nearest distance association. The authors analyzed the handover probability and coverage probability of these two association strategies. They observed that the \ac{RSS} based strategy performs better than the distance-based associations, and there exists an optimal UAV density and altitude to maximize the coverage probability. In our work in \cite{9}, we study a handover management scheme by leveraging device caching in vehicles where the locations of \acp{SBS} are distributed according to a \ac{1-D} \ac{HPPP}. We analyze a handover-skipping strategy by storing the necessary data at the vehicle terminal to reduce unnecessary handovers while maintaining the \ac{QoE} at the user end. We derive the analytical expressions for caching distance, \ac{HO} rate, and the average throughput experienced by the vehicle terminal. We perform a spatiotemporal expectation to derive the effective average rate experienced by the vehicle terminal.

\subsection{Motivation and Contribution}
The deployment of \acp{UAV-BS} has emerged as a promising solution for enhancing network performance in dynamic, blockage-prone urban environments. However, most prior studies assume \acp{UAV} are deployed at fixed altitudes, limiting their mobility to a single axis. In this paper, we overcome this limitation by modeling \acp{UAV-BS} as a 2-D \ac{MPPP}, representing their locations in \ac{3-D} space with random distributions over 
$\mathbb{R}^2$ in the horizontal plane and along the vertical axis. Different from our conference version in \cite{49}, this study extends the analysis to urban scenarios characterized by blockages and mobile users. Notably, this study explores the \ac{MD} of SINR for a specified reliability threshold in networks where UAVs are distributed as a \ac{2-D} \ac{MPPP}, considering the impact of blockages, is an aspect which is not previously addressed in the literature. Moreover, none of the literature has proposed a novel HO management scheme that leverages the caching capabilities of \acp{UE}. This approach aims to mitigate handover delays and enhance \ac{QoS} in a 2-D \ac{MPPP} UAV network under blockage conditions.



Motivated by this, we investigate a \ac{2-D} \ac{MPPP} in which the altitudes of the \acp{UAV-BS} are represented as marks. By leveraging stochastic geometry, this work analyzes \ac{SINR} \ac{MD} for a specific reliability threshold in the presence of blockages. This work proposes a novel cache-enabled HO management scheme, leveraging caching at \ac{UE} to minimize unnecessary \acp{HO}, reduce delays, and ensure seamless service continuity.

The significant contributions are outlined as follows:
\begin{enumerate}
    \item We develop a novel framework to represent UAV-BS locations in \ac{3-D} space, where the \ac{2-D} positions follow a \ac{MPPP} and altitudes are modeled as random marks in an urban network with blockages. We derive analytical expressions for the distance distributions to the nearest \ac{LoS} and \ac{NLoS} UAV-BSs from the typical user, as well as for the probabilities of user association.
    \item 
    Given the proposed network topology, we analyze the \ac{MD} of \ac{SINR} for a specified reliability threshold in UAV-based urban networks. Our study investigates the impact of varying blockage densities and UAV deployment densities, determining the optimal deployment density that enhances overall network reliability while minimizing the mean local delay. 

    \item Additionally, we propose a cache-enabled \ac{HO} management algorithm that leverages caching capabilities at \ac{UE} to minimize unnecessary \acp{HO}, reduce HO delays, improves \ac{QoS} and energy efficiency. This strategy dynamically adjusts cell search intervals and employs an HO skipping mechanism to enhance the user experience in dense UAV urban networks with blockages. 
    \item \textit{Insights:}
    \begin{itemize}
        \item We observe that increasing network density does not consistently enhance \ac{LoS} associations. On the contrary, higher densities expand the NLoS regions relative to \ac{LoS} regions, thereby reducing the number of \ac{LoS} connections from the typical user.    \item Contrary to traditional assumptions, for higher deployement density of UAV-BSs, blockages can enhance the network performance by blocking the interference in the network. In regions with lower blockage levels, fewer UAVs are demanded, while regions with higher blockage levels demand a larger number of UAVs to achieve improved network performance.
        \item The intensity and distribution of blockages play a crucial role in determining the optimal deployment altitude for \acp{UAV-BS}. In regions with sparse blockages, lower altitudes improve performance by maximizing \ac{LoS} connections, whereas in densely blocked areas, higher altitudes are preferable to mitigate obstructions and enhance connectivity. 

    \end{itemize}
    \end{enumerate}
The rest of the paper is organized as follows. In Section~\ref{sec:SM}, we discuss the system model and summarize the objectives. In Section~\ref{DD_MD}, we discuss the appropriate distance distributions to the nearest UAV-BSs and association probabilities. In Section~\ref{CSP_MD}, \ac{MD}-based analytical framework and \ac{CSP} analysis is performed. In Section~\ref{HO_AT}, a cache-based handover skipping algorithm is proposed and derived \ac{HO} rate and average throughput. Section~\ref{NM_D} validates the analytical framework and presents numerical results that highlight key aspects of the network. Finally, the conclusions are discussed in Section~\ref{CL}.

\section{System Model}
\label{sec:SM}


     

\subsection{Network Model}
We consider a downlink \ac{UAV}-based network in urban environments characterized by blockages, designed to support the connectivity demands of smart device users. The network consists of \acp{UAV-BS} modeled as a \ac{2-D} homogeneous \ac{PPP}, $\Phi_U$ on $\mathbb{R}^2$, with intensity $\lambda_u$. Each point of $\Phi_U$ has an associated mark representing the altitude of the \acp{UAV-BS}. We assume the marks are uniformly distributed between $h_{min}$ and $h_{max}$. This configuration forms a \ac{2-D} \ac{MPPP}. We assume that the \acp{UAV-BS} are equipped with down-tilted omnidirectional antennas serving single-antenna users. Without loss of generality, utilizing the stationarity and isotropy properties of the \ac{PPP}~\cite{2}, we consider a user moving in a straight line within a \ac{2-D} plane that passes through the origin at a velocity 
$v$. Let $\textbf{X}_0 \in \Phi_U$ represent the 2-D location of the \ac{UAV-BS} at an altitude $h_0$ to which the user is connected at a specific time. Without loss of generality, leveraging the ergodicity property of the \ac{PPP}~\cite{2}, we assume the user to be located at the
origin at that time. By averaging over various realizations, this user is treated as the typical user. 
Using the Uu interface~\cite{50}, the typical user associates with the \ac{UAV-BS} providing the maximum received power, assessed through the estimated \ac{RSSI} measurements in the downlink. Each \ac{UAV-BS} has a dedicated backhaul link to the centralized network. Additionally, we consider that orthogonal frequency allocation is employed to aggregate users~\cite{4}. Since the altitudes of UAV-BSs are uniformly distributed between $h_{min}$ to $h_{max}$, the cell boundaries are formed by weighted Poisson Voronoi tessellation in the 2-D plane. Another tier of \acp{TBS} can easily be incorporated into the network without any change in the mathematical analysis.

\begin{figure}[htbp]
   \begin{minipage}[t]{0.45\textwidth}
    \centering\includegraphics[width=1\linewidth,height=0.6\linewidth]{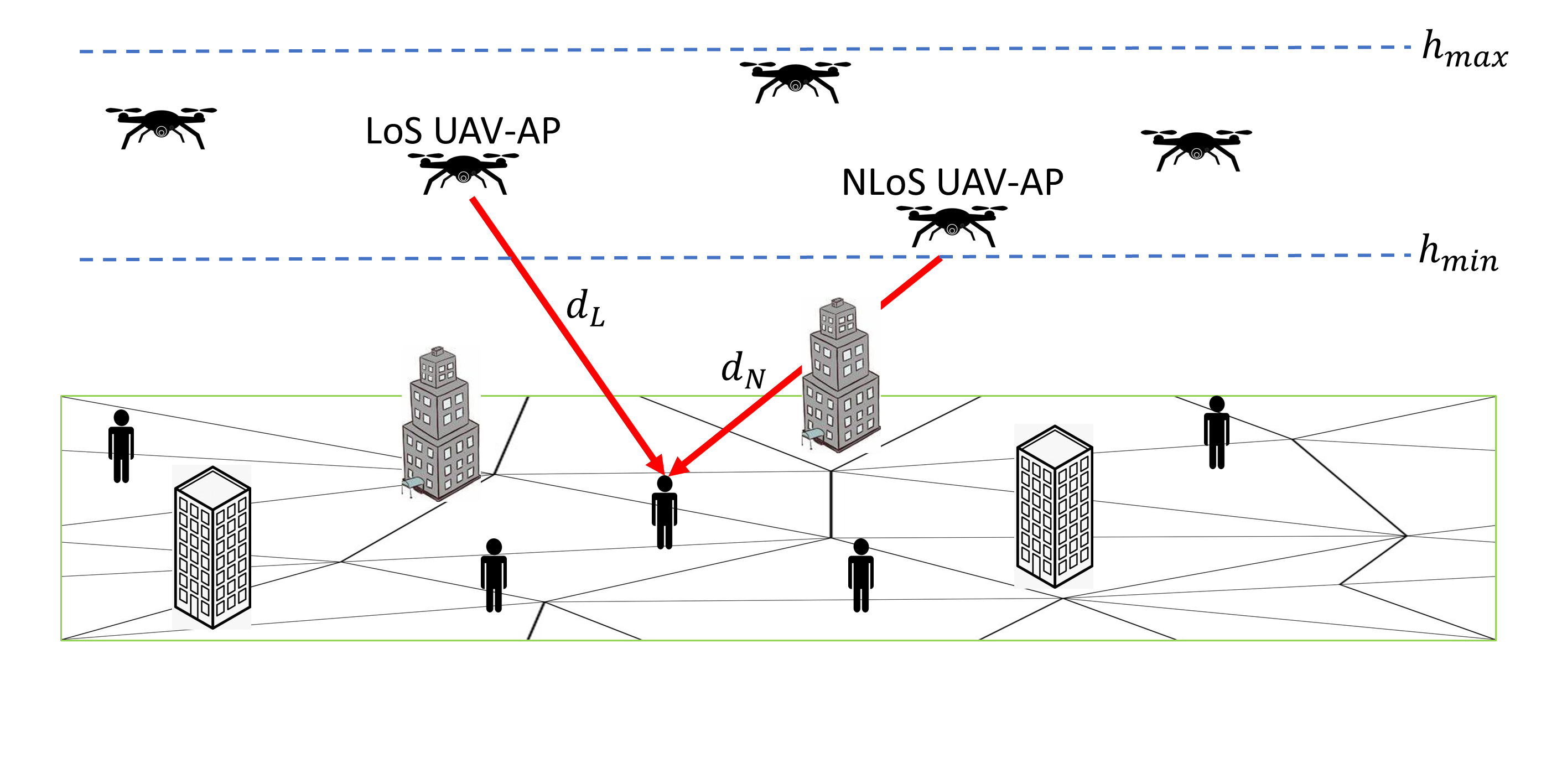}
    \caption{An illustration of the system model in which the user is served by either \ac{LoS} or \ac{NLoS} UAV-BS.}
    \label{fig:system_main}
    \end{minipage}\hfill
    \begin{minipage}[t]{0.45\textwidth}
     \centering
    \includegraphics[width=0.7\linewidth,height=0.5\linewidth]{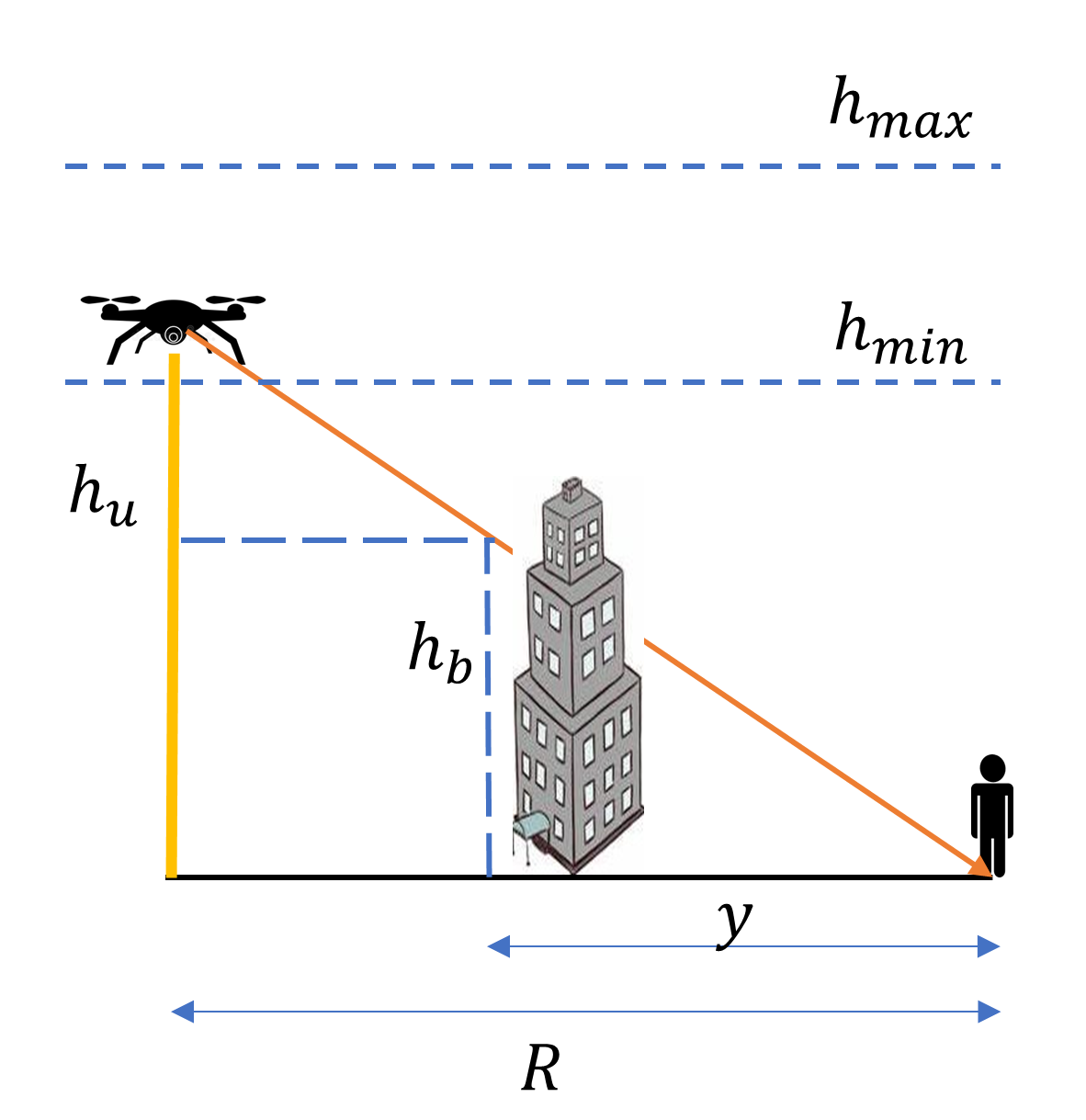}
    \caption{A building intersecting the 2-D link of length $R$ between the user and UAV-BS intersects the 3-D link between user and UAV-BS at the altitude of $h_u$ if and only if this altitude is greater than $h_b$ as in the figure.}
    \label{fig:block}
    \end{minipage}
    \end{figure}

\subsection{Blockages Process}
The randomly positioned buildings are modeled using a Boolean scheme of random rectangles~\cite{1}. The blockage centers represented as ${C_k}$ is a homogeneous \ac{PPP} $\Phi_b$ of intensity $\lambda_b$. By the definition of \ac{PPP}, the blockages are placed independently of one another, ensuring the geometry of the blockages does not intersect. The lengths and widths of the blockages are assumed to be independent and identically distributed (i.i.d.), following the probability distributions $f_L(l)$ and $f_W(w)$, respectively. The intensity, length, and width of the blockages are selected in such a way that the typical user moving in a straight line does not pass through the blockages. The probability of intersection between the user and the blockage is almost zero. Additionally, the orientation of the blockages, $\theta_k$ is uniformly distributed over the interval (0,$2\pi$].  The heights of the blockages are modeled by a Rayleigh distribution with \ac{PDF} $f_H(h)$~\cite{5}. We observe the number of blockages $N$ crossing the 2-D link between the user and the UAV-BS, which is a Poisson-distributed random variable. 
 From Fig.~\ref{fig:block}, $R$  represents the 2-D distance between the user and the \ac{UAV-BS}. $y$ indicates the distance from the user to the point where a building intersects the 2-D link of length $R$. Given that $N$ buildings intersect the link, their intersection points are uniformly distributed along the interval [0, $R$]~\cite{1}. Thus, $y$ is a random variable that follows a uniform distribution from [0,$R$]~\cite{35}.



\begin{lemma}
Given that a blockage intersects the 2-D link between the user and the UAV-BS of distance $R$ at a height $h_u$, the conditional probability that it blocks the 3-D Euclidean link between the user and the UAV-BS is given as
\begin{equation}
    \mathbb{P}(h>h_b|y,h_u)=1- \int_0^{\frac{y h_u}{R}} f_H(h) dh,
    \label{eq1}
\end{equation}   where $h_b=\frac{y h_u}{R} $.  Marginalizing over $y$ and $h_u$,
\begin{equation}
   \eta=\frac{1}{(h_{max}-h_{min})} \int_{h_{min}}^{h_{max}} \int_0^1  \exp{\Bigg(-\frac{h^2 s^2}{2 \sigma^2}\Bigg)}\hspace{0.1cm} \mathrm{d}s \mathrm{d}h  ,   \label{eq3} 
\end{equation}
if and only if $R>0$.
\end{lemma}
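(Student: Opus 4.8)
The plan is to obtain \eqref{eq1} from the planar geometry of the link, then substitute the Rayleigh height law, and finally integrate out the two independent nuisance variables: the abscissa $y$ at which the blockage meets the 2-D segment, and the mark $h_u$ giving the UAV-BS altitude.

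First I would set up the geometry of Fig.~\ref{fig:block}. With the user at the origin at ground level and the serving UAV-BS at horizontal distance $R$ and altitude $h_u$, the 3-D segment joining them attains height $h_b = y h_u / R$ at horizontal coordinate $y$, by similar triangles. A building whose footprint meets the 2-D segment at abscissa $y$ therefore intercepts the 3-D segment precisely when its height $h$ exceeds $h_b$. Since, in the Boolean model, building heights are i.i.d. with PDF $f_H$ and independent of the building locations and of the UAV marks, the conditional blocking probability is just the complementary CDF of a height evaluated at $h_b$, namely $\mathbb{P}(h>h_b\mid y,h_u)=1-\int_0^{h_b} f_H(h)\,\mathrm{d}h$ with $h_b=y h_u/R$; this is \eqref{eq1}.

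Next I would specialize to the Rayleigh law $f_H(h)=(h/\sigma^2)\exp(-h^2/2\sigma^2)$, for which $\int_0^{h_b} f_H(h)\,\mathrm{d}h = 1-\exp(-h_b^2/2\sigma^2)$, so the conditional blocking probability equals $\exp\!\big(-y^2 h_u^2/(2\sigma^2 R^2)\big)$. To marginalize, I would invoke the two facts recalled just before the lemma: conditioned on a blockage crossing the segment, its intersection abscissa $y$ is uniform on $[0,R]$, and independently the altitude mark $h_u$ is uniform on $[h_{min},h_{max}]$. This yields $\eta = \frac{1}{h_{max}-h_{min}}\int_{h_{min}}^{h_{max}}\frac{1}{R}\int_0^R \exp\!\big(-y^2 h_u^2/(2\sigma^2 R^2)\big)\,\mathrm{d}y\,\mathrm{d}h_u$, and the change of variables $s=y/R$ — legitimate exactly when $R>0$, which is also what makes the link nondegenerate — cancels $R$ and produces \eqref{eq3} after relabelling $h_u$ as $h$.

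There is no genuinely hard step here; the points that need care are (i) verifying the independence of the building height from $(y,h_u)$ so that the complementary CDF factors out cleanly, which is precisely where the Boolean-model assumptions are used, and (ii) checking that every occurrence of $R$ really does cancel — in particular that the density $1/R$ of $y$ and the upper limit $R$ combine with the $R^{-2}$ inside the exponent to leave an $R$-free integral — which is the content of the qualifier $R>0$ in the statement.
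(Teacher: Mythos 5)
Your proposal is correct and follows essentially the same route as the paper's proof: obtain \eqref{eq1} from the similar-triangles geometry, use the Rayleigh CCDF to get $\exp\!\big(-h_u^2 (y/R)^2/(2\sigma^2)\big)$, marginalize over $y$ uniform on $[0,R]$ and $h_u$ uniform on $[h_{min},h_{max}]$, and substitute $s=y/R$ (valid only for $R>0$) to arrive at \eqref{eq3}. The only difference is that you make the independence assumptions of the Boolean model and the relabelling of $h_u$ as $h$ explicit, which the paper leaves implicit.
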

\begin{proof}
    As mentioned above, the heights of the blockages are Rayleigh distributed with \ac{PDF} given as $f_H(x)= \frac{x}{\sigma^2} \exp{-\frac{x^2}{2 \sigma^2}}.$ Leveraging the property of Rayleigh distribution, the average height of blockages $H_b= 1.253\sigma$.

The probability that the 3-D link is blocked is given as
\begin{equation}
    \mathbb{P}(h>h_b|y,h_u)=1- \int_0^{\frac{y h_u}{R}} f_H(h) dh= \exp{\Bigg(-\frac{h_u^2 \Big(\frac{y}{R}\Big)^2}{2 \sigma^2}\Bigg)}.
    \label{eq4}
\end{equation}
Marginalizing (\ref{eq4}) with respect $y$,
\begin{equation}
    \mathbb{P}(h>h_b|h_u)= \frac{1}{R} \int_0^R  \exp{\Bigg(-\frac{h_u^2 \Big(\frac{t}{R}\Big)^2}{2 \sigma^2}\Bigg)}\hspace{0.1cm} \mathrm{d}t.  
    \label{eqx}
\end{equation}
Substituting $\frac{t}{R}=s$, and taking an expectation over $h_u$, uniformly distributed between $h_{min}$ and $h_{max}$, we obtain (\ref{eq3}).

\end{proof}
\textbf{Remark:} When 
$R=0$, the UAV-BS is directly above the user. Substituting $R=0$ in (\ref{eq4}), then the conditional probability $\mathbb{P}(h>h_b|y,h_u)=0$ holds. If we marginalize over 
$y$ and $h_u$, the above un-conditioned probability $\eta=0$. 
The total number of blockages crossing the 2-D link between the user and the UAV-BS is $N$. As the blockages are distributed as a \ac{PPP}, $N$ is a poison random variable, and its mean is $q R + p$, where $R$ is the length of the 2-D link~\cite{1}. 
\begin{equation}
    \mathbb{E}[N]=q R + p
    \label{eq6}
\end{equation}
where $q=\frac{2 \lambda_b (\mathbb{E}[L_b]+\mathbb{E}[W_b])}{\pi}$ and $p= \lambda_b \mathbb{E}[L_b] \mathbb{E}[W_b]$,
where $\mathbb{E}[L_b]$ and $\mathbb{E}[W_b]$ are evaluated by the average length and width of the blockages. $\eta$ is independent on $N$ as the intersections form 
\ac{PPP} on the 2-D link of length $R$. Let $\bar{N}$ represent the number of blockages obstructing the 3-D link between the user and UAV-BS. By applying the independent thinning to $N$, the expected value of $\mathbb{E}[\bar{N}]= \eta\mathbb{E}[N]$.


\subsection{Propagation Model}
    The \acp{UAV-BS} can either be in \ac{LoS} or \ac{NLoS} state~\cite{7} depending on visibility from the typical user. Let the locations of the \ac{LoS} UAV-BSs be represented by $\Phi_L$ and the locations of \ac{NLoS} UAV-BSs be represented by $\Phi_N$. Therefore, $\Phi_U= \Phi_L \cup \Phi_N$. The \ac{LoS} transmissions experience Nakagami distributed fast-fading, $G_{L}$, with parameter $m$~\cite{40}. The \ac{NLoS} \ac{UAV-BS} transmissions experience Rayleigh-distributed fast-fading $g_N$, with the variance equal to 1.  The downlink transmit power of the UAV-BSs is $P_u$. For the large-scale path loss, we consider the received power at the typical user from the \ac{LoS} UAV-BS which is at a distance of $d_L$ is given by $R_L=P_u K G_L d_L^{-\alpha_L}$ and the \ac{NLoS} UAV-BS which is at a distance of $d_N$ given as $R_N=P_u K g_N d_N^{-\alpha_N}$. $\alpha_L$ is the pathloss exponent for LoS links and $\alpha_N$ is the pathloss exponent for NLoS links. The  probability of establishing an \ac{LoS} connection  between the \ac{UAV-BS} at a distance of $d$ and height $h$ from the typical user is given as
\begin{equation}
    L_S(d,h)=\exp{\Big(-\eta(q \sqrt{d^2-h^2} +p)\Big)},
\end{equation}
where $h$ is the altitude of UAV-BS from the ground and $d$ is the 3-D Euclidean distance between the UAV-BS and the user, according to the visibility state of the UAV-BS. 
\begin{equation}
    d= \left\{ \begin{matrix} d_L; & \alpha=\alpha_L\\
    d_N; & \alpha=\alpha_N 
\end{matrix} \right . 
\end{equation}
Accordingly, the probability of NLoS transmissions is given
as $N_S(d,h)= 1- L_S(d,h)$. From (\ref{eq6}), we observe the probability of establishing \ac{LoS} link between the UAV-BS and the user is not only the function of the altitudes of UAV-BS and the Euclidean distance but also a function of the intensity of blockages, average length and width, and height of blockages.

\subsection{Performance Metrics}

The main factors driving 5G technology are reliability and latency, which form the foundation of \ac{URLLC} and support applications requiring real-time responsiveness and consistent performance. Considering this, we evaluate two critical performance metrics in a UAV network, where UAVs are modeled as a \ac{2-D} \ac{MPPP}, by analyzing the coverage probability or success probability experienced by the user. The first metric involves a fine-grained analysis of the \ac{SINR} experienced by the user, offering detailed insights into the reliability of communication links. This approach transcends traditional average \ac{SINR} evaluations by capturing link-level variability considering \ac{CSP}. The second metric addresses handover management, leveraging caching capabilities at the user terminal to reduce unnecessary \acp{HO} leading to latency and \ac{QoS} degradation during mobility. 
\begin{itemize}
\item \textbf{\ac{MD} of SINR:}
To perform a fine-grained analysis of \ac{SINR} in a stationary and ergodic point process setting, we calculate \ac{MD} of \ac{SINR}.  MD provides the distribution of the conditional success probability for a randomly selected user in a network. Unlike the mean SINR, which provides an average measure, the \ac{MD} of SINR provides a probabilistic framework to analyze the reliability of communication links in a wireless network. It represents the \ac{CCDF} of the success probability $\mathcal{P}_S(\gamma)$ conditioned on the point processes $\Phi_U$ and $\Phi_b$~\cite{6}.
\begin{equation}
   \bar{F}_{P_{S}}(\gamma, x_r)= \mathbb{P}^{!}(\mathcal{P}_S(\gamma) \hspace{0.1cm} > \hspace{0.1cm} x_r), \hspace{0.3cm} \gamma \in \mathbb{R}^+ ,\hspace{0.1cm}  x_r \in [0,1],
\end{equation}
where $\mathbb{P}^{!}$ represents the reduced Palm probability conditioning the typical user at the origin $o$, $x_r$ is the reliability threshold. The random variable $\mathcal{P}_S(\gamma)$ is given as
\begin{equation}
    \mathcal{P}_S(\gamma)= \mathbb{P}(\mathrm{SINR} > \gamma | \Phi_U, \Phi_b).
\end{equation}

\item{\textbf{Cache-enabled handover analysis:}} The rate requirement for the services for the users is very important for a continuous and uninterrupted \ac{QoS}. When the users are moving, it will lead to frequency handovers and handover failures, leading to degradation of this \ac{QoS}. In a traditional cellular setup, handovers happen at every cell boundary, leading to excessive delay in communication. In order to reduce the latency experienced by the user due to handovers, we explore a novel \ac{HO} management strategy aimed at minimizing handovers to provide a seamless user experience
across applications such as autonomous driving, navigation,
and video streaming.  According to 3GPP standards~\cite{33}, a user performs a cell search every $t_s$ seconds. A handover is triggered when a neighboring UAV-BS offers a \ac{RSS} higher than the associated UAV-BS. Each user requires
$\Delta T$ seconds to measure this \ac{RSS} from the neighboring cell. We dynamically select $t_s$ by leveraging caching capabilities in \ac{UE} in motion. We consider every user to have a rate requirement of $s_r$, and every \ac{UE} has a cache size of $G$ and moving with a velocity of $v$. 
Initially, cache memory is empty. When the downloaded data by the user exceeds the required service rate $s_r$, the difference is stored in the cache. Using this cached data, the user can delay cell searches and avoid unnecessary \acp{HO}, thereby reducing latency and maintaining a high \ac{QoS} during mobility. 
\end{itemize}

To obtain the above metrics, we derive the distance distribution of the associated UAV-BS from the typical user and the \ac{SINR} coverage probability experienced by the user.

\section{Characterization of Distance Distributions and Association Probabilities}
\label{DD_MD}
In this section, we obtain the distribution of the distance from a typical user to the nearest \ac{LoS} or \ac{NLoS} \ac{UAV-BS}, where the locations of \acp{UAV-BS} are distributed as a \ac{2-D} \ac{MPPP}. 

\begin{lemma}
Given that the typical user observes at least one \ac{LoS} and one \ac{NLoS} \ac{UAV-BS}, the \ac{CDF} for the distance distributions to the nearest LoS and NLoS UAV-BS distributed as a \ac{2-D} \ac{MPPP}, defined as $d_L$ and $d_N$ respectively, are given as
 \begin{equation}
    F_{d_{L}}(z)=\left \{\begin{matrix}
\bigg(1- \exp{\Big( A \times \int_{h_{min}} ^ z \mathcal{L}(z,h)\hspace{0.1cm} \mathrm{d} h \Big)}\bigg)/B_L; & \\ h_{min} \leq z < h_{max} \\ 
\bigg(1- \exp{\Big(A \times \int_{h_{min}} ^ {h_{max}} \mathcal{L}(z,h)\hspace{0.1cm} \mathrm{d} h  \Big)}\bigg)/B_L; & \\ z \geq h_{max}
    \end{matrix} \right.
    \label{cdf_z1}
\end{equation}

\begin{equation}
    F_{d_{N}}(z)=\left\{ \begin{matrix}
\bigg(1- \exp{\Big( A \times \int_{h_{min}} ^ z \mathcal{N}(z,h)\hspace{0.1cm} \mathrm{d} h \Big)}\bigg); & \\ h_{min} \leq z < h_{max} \\ 
\bigg(1- \exp{\Big(A \times  \int_{h_{min}} ^ {h_{max}} \mathcal{N}(z,h)\hspace{0.1cm} \mathrm{d} h  \Big)}\bigg); & \\ z \geq h_{max}
    \end{matrix} \right.
    \label{cdf_z2}
\end{equation}
Here $A$= $\frac{-2\lambda_u \pi}{(h_{max}- h_{min})}$ and $\delta(z,h)= \sqrt{z^2-h^2}$, for $R > 0$.


\begin{equation}
\mathcal{L}(z,h)= \frac{\exp{(-p\eta)} - L_S(z,h)(q\eta\sqrt{z^2-h^2}+1)}{(q\eta)^2}
\label{bb1}
\end{equation}

\begin{equation}
\mathcal{N}(z,h)= \frac{(z^2-h^2)}{2} -\mathcal{L}(z,h)
  \label{cc1}
\end{equation}
\end{lemma}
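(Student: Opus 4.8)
### Proof Plan

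The plan is to compute the CDFs of $d_L$ and $d_N$ by conditioning on the marked point process and exploiting the independent thinning of $\Phi_U$ into the LoS set $\Phi_L$ and NLoS set $\Phi_N$. First I would set up the event $\{d_L > z\}$ as the statement that there is no LoS UAV-BS within 3-D distance $z$ of the typical user. Because a UAV-BS at 2-D position $\mathbf{x}$ with mark $h$ is LoS with probability $L_S(d,h)$ (with $d=\sqrt{\|\mathbf{x}\|^2+h^2}$) independently across points, the LoS process $\Phi_L$ is an inhomogeneous PPP on $\mathbb{R}^2 \times [h_{min},h_{max}]$ with intensity measure $\lambda_u \cdot \frac{1}{h_{max}-h_{min}} \, L_S(d,h)\, d\mathbf{x}\, dh$. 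By the void probability of a PPP, $\mathbb{P}(d_L > z)$ equals $\exp$ of the negative intensity integrated over the region $\{(\mathbf{x},h): \sqrt{\|\mathbf{x}\|^2+h^2} \le z\}$. Switching to polar coordinates in $\mathbf{x}$ and writing the 2-D radius as $r$, the constraint becomes $r \le \delta(z,h)=\sqrt{z^2-h^2}$, which is only feasible when $h \le z$; this is exactly where the two cases $h_{min}\le z < h_{max}$ versus $z \ge h_{max}$ in the statement come from — in the first case the mark integral runs only up to $z$, in the second up to $h_{max}$.

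Next I would carry out the inner radial integral $\int_0^{\delta(z,h)} 2\pi r \, L_S(\sqrt{r^2+h^2},h)\, dr$. Since $L_S(\sqrt{r^2+h^2},h)=\exp(-\eta(q r + p))$ depends on the 2-D ground distance $r$ only (recall $\sqrt{d^2-h^2}$ in the definition of $L_S$ is precisely that ground distance), the substitution $u=r$ reduces this to $2\pi e^{-p\eta}\int_0^{\delta} r\, e^{-q\eta r}\, dr$, a standard integral of the form $\int r e^{-ar}dr = (1-(1+ar)e^{-ar})/a^2$. Evaluating at the upper limit $\delta=\delta(z,h)$ yields exactly $\mathcal{L}(z,h)$ as written in \eqref{bb1}. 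Collecting the prefactor $\lambda_u/(h_{max}-h_{min})$ and the overall minus sign gives $A=-2\lambda_u\pi/(h_{max}-h_{min})$, so $\mathbb{P}(d_L>z)=\exp(A\int \mathcal{L}(z,h)\,dh)$ with the mark-integration limits as dictated by the two cases. The NLoS computation is identical except that the LoS probability $L_S$ is replaced by $N_S=1-L_S$: the radial integral splits into $\int_0^\delta 2\pi r\, dr = \pi \delta^2 = \pi(z^2-h^2)$ minus the LoS piece $2\pi e^{-p\eta}\int_0^\delta r e^{-q\eta r}dr$, giving $2\pi$ times $\frac{z^2-h^2}{2}-\mathcal{L}(z,h)=\mathcal{N}(z,h)$, which is \eqref{cc1}.

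Finally I would handle the conditioning "given at least one LoS and one NLoS UAV-BS." The raw survival function gives $\mathbb{P}(d_L>z)$ and then $F_{d_L}(z)=1-\mathbb{P}(d_L>z)$, but this is the unconditional CDF; dividing by the probability $B_L$ that the user sees at least one LoS UAV-BS (equivalently $1$ minus the void probability of $\Phi_L$ over all of $\mathbb{R}^2\times[h_{min},h_{max}]$) renormalizes it to the conditional law, which produces the $/B_L$ factor in \eqref{cdf_z1}; I would note that for the NLoS case the corresponding normalizer $B_N$ is taken to be $1$ since, in this blockage model, the probability of observing at least one NLoS UAV-BS is essentially one in the relevant parameter regime (or is absorbed per the paper's convention). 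The main obstacle is bookkeeping the domain of the mark integral correctly — making sure the feasibility constraint $h\le z$ is enforced for $h_{min}\le z<h_{max}$ and saturates at $h_{max}$ for $z\ge h_{max}$ — together with being careful that the argument of $L_S$ is the ground distance $\sqrt{d^2-h^2}=r$ and not the 3-D distance, since mixing these up would break the clean closed form for $\mathcal{L}$.
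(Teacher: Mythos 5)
Your proposal is correct and follows essentially the same route as the paper's Appendix A: independent thinning of the marked PPP into LoS/NLoS components, the void probability over the 3-D ball written in polar coordinates with the mark (height) averaged uniformly, the case split $h_{min}\le z<h_{max}$ versus $z\ge h_{max}$ from the feasibility of $\delta(z,h)=\sqrt{z^2-h^2}$, closed-form evaluation of the radial integral to get $\mathcal{L}(z,h)$ (and $\mathcal{N}(z,h)$ via $N_S=1-L_S$), and normalization by $B_L$ (with the NLoS normalizer equal to one). The only difference is cosmetic: you evaluate the inner integral $\int_0^{\delta} r\,e^{-\eta(qr+p)}\,\mathrm{d}r$ explicitly, whereas the paper leaves it as the definition of $\mathcal{L}$.
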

\begin{proof}
    See Appendix A.
\end{proof}
\textbf{Remark:} If $R=0$, when the UAV is exactly above the user, $L_S(z,h)=1$. 
 In that case, $\mathcal{L}(z,h)= \frac{(z^2-h^2)}{2}$, $\mathcal{N}(z,h)=0$.

$p$ and $q$ are the functions of intensity and dimensions of blockages as defined in (\ref{eq6}). $B_L$ is the probability that there exists at least one \ac{LoS} UAV-BS. 

Based on the blockage process considered in this paper, in high-blockage scenarios, The probability of having at least one \ac{LoS} \ac{UAV-BS} is not guaranteed to be one. Therefore, from \cite{34}, we derive the expression for $B_L$ given as
\begin{multline}
        B_L=1-\exp\Bigg( A \times \bigg[\int_{h_{min}}^{h_{max}} \int_{h_{min}}^{z_1} \mathcal{L}(z_1,h_1) \mathrm{d} h_1 \hspace{0.1 cm} \mathrm{d} z_1 + \\ \int_{h_{max}}^\infty\int_{h_{min}}^{h_{max}}\mathcal{L}(z_2,h_2) \mathrm{d}h_2 \hspace{0.1 cm} \mathrm{d}z_{2} \bigg] \Bigg).
        \label{bl1}
\end{multline}
However, given the blockage process, we observe that the probability of having at least one NLoS UAV-BS, considering they are distributed according to a \ac{PPP}, is always one.

Taking the derivative of (\ref{cdf_z1}) and (\ref{cdf_z2}) with respect to $z$, the \ac{PDF} of the distance of the typical user from the nearest LoS and NLoS UAV-BSs, given as, $f_{d_L}(z)$ and $f_{d_N}(z)$ respectively.
\begin{equation}
    f_{d_L}(z)= \left\{ \begin{matrix} f'_{d_L}(z); & h_{min} \leq z <h_{max}\\
    f^{''}_{d_L}(z); & z \geq h_{max} 
\end{matrix} \right . 
\end{equation}
Similar expressions are considered for $f_{d_N}(z)$.

Next, we derive the association probabilities for \ac{LoS} and \ac{NLoS} UAV-BSs. As previously mentioned, a typical user can connect to either a \ac{LoS} or \ac{NLoS} UAV-BS, depending on the maximum RSSI strategy.
\begin{lemma}
    The probability of association of the typical user with an \ac{NLoS} UAV-BS, $A_N$ is given by: 

\begin{multline}  A_N=\int_{h_{min}}^{h_{max}}\bigg[\exp{\bigg(A  \times \int_{h_{min}} ^ {h_{max}} \mathcal{L}(r_1^{\frac{\alpha_N}{\alpha_L}},h_a) \hspace{0.1cm} \mathrm{d}h_a \bigg)}\bigg]  \\ f'_{d_N}(r_1) \mathrm{d} r_1 + \int_{h_{max}}^{\infty} \bigg[\exp{\bigg(A  \times \int_{h_{min}} ^ {h_{max}} \mathcal{L}(r_2^{\frac{\alpha_N}{\alpha_L}},h_a) \hspace{0.1cm} \mathrm{d}h_a \bigg)}\bigg] \\ f^{''}_{d_N}(r_2) \mathrm{d} r_2 
    \label{ass1}
\end{multline}
\end{lemma}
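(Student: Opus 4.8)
The plan is to reduce the maximum-RSSI association rule to a comparison of the fading-averaged received powers of the nearest LoS and nearest NLoS UAV-BSs, and then average the resulting exceedance probability over the distance $d_N$ whose density was obtained in the previous lemma. Since all LoS links share the exponent $\alpha_L$ and all NLoS links share $\alpha_N$, and the RSSI is a long-term estimate in which the fast fading $G_L$ and $g_N$ average out, the strongest LoS (resp.\ NLoS) UAV-BS in the RSSI sense is exactly the nearest one, at 3-D distance $d_L$ (resp.\ $d_N$). Hence the typical user is served by an NLoS UAV-BS if and only if $P_u K\, d_N^{-\alpha_N} > P_u K\, d_L^{-\alpha_L}$, equivalently $d_L > d_N^{\alpha_N/\alpha_L}$, where the degenerate case of an empty $\Phi_L$ (so $d_L=\infty$) is automatically included. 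Therefore
\[
  A_N \;=\; \mathbb{P}\!\left(d_L > d_N^{\alpha_N/\alpha_L}\right)
  \;=\; \mathbb{E}_{d_N}\!\left[\mathbb{P}\!\left(d_L > d_N^{\alpha_N/\alpha_L} \,\middle|\, d_N\right)\right].
\]

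The structural fact I would invoke next is that $\Phi_L$ and $\Phi_N$ are independent point processes: they are obtained from $\Phi_U$ by independent thinning with retention probability $L_S(d,h)$ that depends only on a point's location and mark, so $d_L$ and $d_N$ are independent and the conditioning inside the inner probability can be dropped. The quantity $\mathbb{P}(d_L > y)$ is precisely the void probability of $\Phi_L$ in the 3-D ball of radius $y$ centred at the origin, which is exactly the un-normalised expression appearing in the proof of the distance-distribution lemma: from~(\ref{cdf_z1}), $1 - B_L F_{d_L}(y) = \exp\!\big(A \int \mathcal{L}(y,h)\,\mathrm{d}h\big)$, with the inner integral over $[h_{min},y]$ for $y<h_{max}$ and over $[h_{min},h_{max}]$ for $y \ge h_{max}$, and $\mathcal{L}$ as in~(\ref{bb1}). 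For the argument $y=r^{\alpha_N/\alpha_L}$ with $r \ge h_{min}$ and $\alpha_N \ge \alpha_L$, the standing deployment assumption $h_{min}^{\alpha_N/\alpha_L} \ge h_{max}$ forces $r^{\alpha_N/\alpha_L} \ge h_{max}$ throughout, so only the second branch is active and $\mathbb{P}(d_L > r^{\alpha_N/\alpha_L}) = \exp\!\big(A \int_{h_{min}}^{h_{max}} \mathcal{L}(r^{\alpha_N/\alpha_L},h_a)\,\mathrm{d}h_a\big)$.

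It then remains to carry out the expectation over $d_N$, splitting the integral according to the two pieces of the density established earlier, namely $f'_{d_N}$ on $[h_{min},h_{max})$ and $f''_{d_N}$ on $[h_{max},\infty)$; substituting the void probability above into each piece reproduces~(\ref{ass1}). I expect the main obstacle to be the justificatory steps rather than the computation: first, arguing cleanly that the RSSI rule collapses to the $d_N^{-\alpha_N}$ versus $d_L^{-\alpha_L}$ comparison (fast fading averaged out, and only the two nearest points matter because the exponents are common within each class); and second, pinning down the regime $r^{\alpha_N/\alpha_L} \ge h_{max}$ so that the single-branch void-probability formula, rather than its $z<h_{max}$ counterpart, applies inside the integral. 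Once these are in place, the remaining step is a routine application of the law of total expectation together with Fubini.
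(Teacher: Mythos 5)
Your proposal is correct and follows essentially the same route as the paper: reduce the max-RSSI rule to the event $d_L > d_N^{\alpha_N/\alpha_L}$, evaluate its probability through the LoS void probability $\exp\big(A\int_{h_{min}}^{h_{max}}\mathcal{L}(\cdot,h_a)\,\mathrm{d}h_a\big)$ (using that $d_N^{\alpha_N/\alpha_L}\ge h_{max}$ so only the second branch of the distance CDF is active), and then average over the two-piece density of $d_N$. The only differences are cosmetic: you invoke the unconditional void probability directly, whereas the paper splits into the events of no LoS UAV-BS versus at least one (the $B_L$ factors cancel and give the same expression), and your explicit condition $h_{min}^{\alpha_N/\alpha_L}\ge h_{max}$ is a sharper statement of the paper's ``$\alpha_N\gg\alpha_L$, hence $d_N^{\alpha_N/\alpha_L}>h_{max}$'' justification.
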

where $\mathcal{L}(z,h)$ is obtained from (\ref{bb1}) and $A$= $\frac{-2\lambda_u \pi}{(h_{max}- h_{min})}$.
\begin{proof}
See Appendix B
\end{proof}
Naturally, the probability of \ac{LoS} UAV-BS association $A_L= 1- A_N$. This solution is valid when $\alpha_N > \alpha_L$ and $h_{min} << h_{max}$, which is feasible in practical scenarios.

\section{ Meta Distribution of SINR Analysis}
\label{CSP_MD}
In this section, we derive the \ac{CSP} experienced by the typical user associated with the \ac{LoS} or \ac{NLoS} UAV-BS, conditioned on the point processes $\Phi_U$ and $\Phi_b$, to obtain the MD of SINR. Taking an expectation over the $\Phi_U$ and $\Phi_b$, we calculate the $b^{th}$ moment of the \ac{CSP}. Without loss of generality, leveraging the ergodicity property of the \ac{PPP}~\cite{2}, we conduct the downlink analysis from the perspective of a typical user at the origin.

\subsection{Conditional success probability of LoS UAV-BS}

The coverage probability or conditional success probability $P_{SL}(\gamma)$ experienced by the user associated to an \ac{LoS} UAV-BS, conditioning on $\Phi_U$ and $\Phi_b$, is given as
\begin{equation}
P_{SL}(\gamma)= \mathbb{P}\bigg(\frac{P_u K G_L d_L^{-\alpha_L}}{\sigma_N + I_L'+I_N} > \gamma | \Phi_U,\Phi_b\bigg),
     \label{csp_1}
  \end{equation}
  where $I_L'$ and $I_N$ are the interfering strengths from the other \ac{LoS} and \ac{NLoS} UAV-BSs respectively, where $I_L'= \sum_{i:\textbf{X}_i \in  \Phi_L'} P_u K G_L^{'} d_L'^{-\alpha_L}$ and $I_N= \sum_{i:\textbf{X}_i \in  \Phi_N }P_u K g_N d_N^{-\alpha_N}$. $\Phi_L'$ is the point process of LoS UAV-BSs in which the associated LoS UAV-BS is omitted.

 \begin{theorem}
\label{Th1}
 The $b^{th}$ moment of $P_{SL}(\gamma)$ is given as
 \begin{multline}
     M_{L}(\gamma)= A_L \int_{h_{min}}^{h_{max}} \Bigg [ A'(z) U'(z,\gamma,b) \times f_{d_L}(z)  \Bigg] \mathrm{d} z + \\ \int_{h_{max}}^\infty  \Bigg[ B'(z) U'(z,\gamma,b)   \times f_{d_L}(z)  \Bigg] \mathrm{d} z,
     \label{md1}
 \end{multline}
 where
\begin{equation}
 A'(z)=\exp{\Big(-2 \pi \lambda_u l_1'(z)\Big)} \times \exp{\Big(-2 \pi \lambda_u l_2'(z)\Big)}.
 \label{inn_1}
 \end{equation}
 \begin{equation}
 B'(z)=\exp{\Big(-2 \pi \lambda_u l_3'(z)\Big)} \times \exp{\Big(-2 \pi \lambda_u l_4'(z)\Big)}
  \label{inn_2}.
\end{equation}
 \begin{multline}
l_1'(z)= \int_0^{k(z)} \Bigg(1- \bigg[\int_{n(z,x)}^{h_{max}} \frac{\rho'(t,x)}{(h_{max}-n(z,x))}\mathrm{d}t\bigg]\Bigg)  x \mathrm{d}x + \\\int_{k(z)}^\infty \Bigg(1- \bigg[\int_{h_{min}}^{h_{max}} \frac{\rho'(t,x)}{(h_{max}-h_{min})} \mathrm{d}t\bigg]\Bigg) x \mathrm{d}x . 
 \end{multline}
  \begin{multline}
l_2'(z)= \int_0^{k(z)} \Bigg(1- \bigg[\int_{n(z,x)}^{h_{max}} \frac{\tau'(t,x)}{(h_{max}-n(z,x))}\mathrm{d}t\bigg]\Bigg)  x \mathrm{d}x + \\ \int_{k(z)}^\infty \Bigg(1- \bigg[\int_{h_{min}}^{h_{max}} \frac{\tau'(t,x)}{(h_{max}-h_{min})} \mathrm{d}t\bigg]\Bigg) x \mathrm{d}x . 
 \end{multline}
\begin{multline}
l_3'(z)= \int_{l(z)}^{k(z)} \Bigg(1- \bigg[\int_{n(z,x)}^{h_{max}} \frac{\rho'(t,x)}{(h_{max}-n(z,x))}\mathrm{d}t\bigg]\Bigg)  x \mathrm{d}x + \\ \int_{k(z)}^\infty \Bigg(1- \bigg[\int_{h_{min}}^{h_{max}} \frac{\rho'(t,x)}{(h_{max}-h_{min})} \mathrm{d}t\bigg]\Bigg) x \mathrm{d}x .  
 \end{multline}
 \begin{multline}
l_4'(z)= \int_{l(z)}^{k(z)} \Bigg(1- \bigg[\int_{n(z,x)}^{h_{max}} \frac{\tau'(t,x)}{(h_{max}-n(z,x))}\mathrm{d}t\bigg]\Bigg)  x \mathrm{d}x + \\ \int_{k(z)}^\infty \Bigg(1- \bigg[\int_{h_{min}}^{h_{max}} \frac{\tau'(t,x)}{(h_{max}-h_{min})} \mathrm{d}t\bigg]\Bigg) x \mathrm{d}x .  
 \end{multline}
 
$U'(z,\gamma,b)$ =$\exp{\bigg(\frac{-\gamma \sigma_N b }{P_u K z^{-\alpha_L}}\bigg)}$, $k(z)$=$\sqrt{z^2-h_{min}^2}$, $n(z,x)$=$\sqrt{z^2-x^2}$, $l(z)$=$\sqrt{z^2-h_{max}^2}$.
\begin{multline}
\rho'(t,x)=
L_S(\sqrt{x^2+t^2},t) \Big(\frac{1}{1+\frac{\gamma (\sqrt{x^2+t^2})^{-\alpha_L}}{z^{-\alpha_L}}}\Big)^b + \\ N_S(\sqrt{x^2+t^2},t) \Big(\frac{1}{1+\frac{\gamma (\sqrt{x^2+t^2})^{-\alpha_N}}{z^{-\alpha_L}}}\Big)^b. 
\label{eq123}
\end{multline}
\begin{multline}
\tau'(t,x)=
L_S(\sqrt{x^2+t^2},t) \Big(\frac{m}{m+\frac{\varepsilon \gamma (\sqrt{x^2+t^2})^{-\alpha_L}}{z^{-\alpha_L}}}\Big)^{mb} + \\ N_S(\sqrt{x^2+t^2},t) \Big(\frac{m}{m+\frac{\varepsilon \gamma (\sqrt{x^2+t^2})^{-\alpha_N}}{z^{-\alpha_L}}}\Big)^{mb}.
\label{eq234}
\end{multline}
\end{theorem}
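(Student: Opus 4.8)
The plan is to compute the $b^{th}$ moment $M_L(\gamma) = \mathbb{E}[P_{SL}(\gamma)^b]$ by first conditioning on the event that the user associates with an LoS UAV-BS at 3-D distance $z$, then taking the expectation of $P_{SL}(\gamma)^b$ over the interference fields, and finally deconditioning over $z$ using the density $f_{d_L}(z)$ and the association probability $A_L$. Since the serving link experiences Nakagami-$m$ fading, I would start from the standard observation that for a Gamma-distributed $G_L$ with integer parameter $m$, the conditional success probability $P_{SL}(\gamma) = \mathbb{P}(G_L > \gamma(\sigma_N + I_L' + I_N) z^{\alpha_L}/(P_u K)\mid \Phi_U,\Phi_b)$ can be expressed via the tail of the Gamma CDF, and raising it to the $b^{th}$ power and then taking the Palm expectation over $\Phi_U$ leads — after the usual dominant-interferer bounding or the exact expansion — to a product form. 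In this paper the authors evidently use the approximation $\mathbb{P}(G_L>g)\approx \left(1-e^{-\varepsilon g}\right)^{m}$ (hence the constant $\varepsilon$, typically $\varepsilon = m(m!)^{-1/m}$) so that $P_{SL}(\gamma)^b$ becomes a finite sum whose typical term is an exponential in the aggregate interference; I would then use the independence of the LoS and NLoS tiers to factor the Laplace-type functionals, giving the two exponential factors in $A'(z)$ and $B'(z)$ — one carrying $\rho'$ (the $b$-fold product coming from the Rayleigh/NLoS interferers and the $(1+\cdot)^{-b}$ terms) and one carrying $\tau'$ (the Nakagami/LoS interferers and the $(m/(m+\cdot))^{mb}$ terms).

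The core technical step is the conversion of the Palm expectation over $\Phi_U$ into the integrals $l_1',\ldots,l_4'$. Here I would invoke the probability generating functional of the PPP $\Phi_U$, but with the crucial twist that each UAV-BS at 2-D distance $x$ and altitude $t$ is independently marked LoS or NLoS with probabilities $L_S(\sqrt{x^2+t^2},t)$ and $N_S(\sqrt{x^2+t^2},t)$ respectively, and the altitude $t$ is itself uniform on $[h_{min},h_{max}]$ — this is exactly the MPPP structure. The integrand $1 - \int \rho'(t,x)\,\mathrm{d}t/(\cdot)$ is the standard $1 - \mathbb{E}_{\text{mark}}[\text{fading transform}]$ appearing in a PGFL, averaged over the altitude mark. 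The split of the $x$-integral at $k(z)=\sqrt{z^2-h_{min}^2}$ and the replacement of the lower altitude limit by $n(z,x)=\sqrt{z^2-x^2}$ for $x<k(z)$ encodes the exclusion region imposed by the maximum-RSSI association: an interfering UAV at horizontal distance $x$ can have altitude only above $n(z,x)$ (equivalently 3-D distance at least $z$, up to the LoS/NLoS path-loss scaling) without being the serving one; for $x\ge k(z)$ no such constraint binds and the full altitude range $[h_{min},h_{max}]$ is used. The further split at $l(z)=\sqrt{z^2-h_{max}^2}$ in $l_3',l_4'$ handles the regime $z\ge h_{max}$, where horizontal distances below $l(z)$ are geometrically impossible for an LoS serving UAV. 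I would also account for the LoS-to-NLoS path-loss translation when an LoS-associated user's nearest NLoS interferer is at the translated distance — this is why the $\rho'$ term contains both an $\alpha_L$-branch and an $\alpha_N$-branch inside the same altitude integral.

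The remaining pieces are routine: the noise term factors out as $U'(z,\gamma,b) = \exp(-\gamma\sigma_N b z^{\alpha_L}/(P_u K))$ from the $\exp(-s\sigma_N)$ contribution with the appropriate moment exponent; deconditioning on $z$ over $[h_{min},h_{max}]$ uses $f_{d_L}(z) = f'_{d_L}(z)$ and over $[h_{max},\infty)$ uses $f_{d_L}(z) = f''_{d_L}(z)$, which is why $A'(z)$ (built from $l_1',l_2'$, with the $[0,k(z)]$ inner split) multiplies the first integral and $B'(z)$ (built from $l_3',l_4'$, with the $[l(z),k(z)]$ split) multiplies the second; and the overall factor $A_L$ on the first integral reflects conditioning on LoS association via Lemma 3 (note the second integral over $z\ge h_{max}$ is already weighted inside $f_{d_L}$). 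The main obstacle I anticipate is getting the association-induced exclusion region exactly right when it interacts with both the altitude marking and the LoS/NLoS path-loss dichotomy — in particular, justifying that, conditioned on the serving LoS UAV being at 3-D distance $z$, an arbitrary UAV at horizontal distance $x$ contributes to $I_L'$ only if its 3-D distance exceeds $z$ and to $I_N$ only if its NLoS-equivalent distance exceeds $z^{\alpha_L/\alpha_N}$, and that these constraints are captured precisely by the limits $n(z,x)$, $k(z)$, $l(z)$; everything else is a careful bookkeeping exercise with the PGFL and the Gamma-tail expansion.
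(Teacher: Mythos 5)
Your overall architecture (condition on the serving LoS distance $z$, factor out the noise term, apply the PGFL of the altitude-marked PPP with LoS/NLoS marking, decondition with $f_{d_L}$ and weight by $A_L$) matches the paper's proof in outline, but two of your central steps would not lead to the stated expressions. First, the serving-link fading: you propose the Gamma-tail (Alzer-type) approximation $\mathbb{P}(G_L>g)\approx(1-e^{-\varepsilon g})^{m}$ followed by an expansion, which necessarily produces a finite binomial sum of exponential terms — no such sum appears in (\ref{md1}). The paper's own derivation instead applies the CCDF of an \emph{exponential} variable to the serving fading (its appendix works with $g_l$), so that the conditional success probability is a single product of the noise factor and Laplace-type fading terms of the interferers; raising that single product to the $b$-th power yields directly $U'(z,\gamma,b)$, the $(1/(1+\cdot))^{b}$ factors and the $(m/(m+\varepsilon\,\cdot))^{mb}$ factors. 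The constant $\varepsilon$ you attribute to the serving link in fact appears only inside the LoS-interferer terms $\tau'$, not as part of a serving-link tail expansion.

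Second, the exclusion region and the tier structure. You assert that an NLoS interferer is excluded only beyond the path-loss-translated distance $z^{\alpha_L/\alpha_N}$ and that this is ``captured precisely'' by the limits $n(z,x)$, $k(z)$, $l(z)$; it is not. Those limits implement one common constraint — every interferer, in both the $\alpha_L$ and $\alpha_N$ branches, lies at 3-D distance at least $z$ (altitude above $n(z,x)$ for $x<k(z)$, unrestricted altitude for $x\ge k(z)$, and no interferers at all for $x<l(z)$ when $z\ge h_{max}$) — and exactly the same limits are used in $l_1',\dots,l_4'$ for both $\rho'$ and $\tau'$. Relatedly, your identification of the two exponential factors as ``NLoS tier'' ($\rho'$) and ``LoS tier'' ($\tau'$) does not match the statement: each of $\rho'$ and $\tau'$ already averages over the blockage state through $L_S$ and $N_S$ with both path-loss exponents, because in the paper the expectation over $\Phi_b$ is taken per interferer inside the product, then the altitude expectation, and only then is the PGFL applied (once for each fading type). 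The obstacle you flag at the end — justifying the translated exclusion for NLoS interferers — is precisely where your route departs from the formulas you are asked to prove, so as written the proposal would not reproduce (\ref{md1}).
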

where $\varepsilon= m(m!)^{\frac{-1}{m}}$.

\begin{proof}
    See Appendix C.
\end{proof}

\subsection{Conditional success probability of NLoS UAV-BS}

Here, we derive the conditional success probability $P_{SN}(\gamma)$ of the typical user associated to an \ac{NLoS} UAV-BS, i.e.,
\begin{equation}
P_{SN}(\gamma)= \mathbb{P}\bigg(\frac{P_u K g_N d_N^{-\alpha_N}}{\sigma_N + I_N'+I_L} > \gamma | \Phi_U,\Phi_b\bigg),
     \label{csp_3}
  \end{equation}
 where $I_L$ and $I_N'$ are the interfering strengths from the \ac{LoS} and the other \ac{NLoS} UAV-BSs respectively, where $I_L= \sum_{i:\textbf{X}_i \in  \Phi_L} P_u K G_L d_L^{-\alpha_L}$ and $I_N= \sum_{i:\textbf{X}_i \in  \Phi_N' }P_u K g_N' d_N'^{-\alpha_N}$. $\Phi_N'$ is the point process of NLoS UAV-BSs in which the associated NLoS UAV-BS is omitted.
 
\begin{theorem}
 The $b^{th}$ moment of $P_{SN}(\gamma)$ is given as
 \begin{multline}
     M_{N}(\gamma)= A_N \int_{h_{min}}^{h_{max}} \Bigg [ A^{''}(z) U^{''}(z,\gamma,b)   \times f_{d_N}(z)  \Bigg] \mathrm{d} z +\\  \int_{h_{max}}^\infty  \Bigg[ B^{''}(z) U^{''}(z,\gamma,b)    \times f_{d_N}(z)  \Bigg] \mathrm{d} z,
     \label{md2}
 \end{multline}
 where
\begin{equation}
 A^{''}(z)=\exp{\Big(-2 \pi \lambda_u l_1^{''}(z)\Big)} \times \exp{\Big(-2 \pi \lambda_u l_2^{''}(z)\Big)}.
 \label{inn_3}
 \end{equation}
 \begin{equation}
 B^{''}(z)=\exp{\Big(-2 \pi \lambda_u l_3^{''}(z)\Big)} \times \exp{\Big(-2 \pi \lambda_u l_4^{''}(z)\Big)}
  \label{inn_4}.
\end{equation}
 \begin{multline}
l_1^{''}(z)= \int_0^{k(z)} \Bigg(1- \bigg[\int_{n(z,x)}^{h_{max}} \frac{\rho^{''}(t,x)}{(h_{max}-n(z,x))}\mathrm{d}t\bigg]\Bigg)  x \mathrm{d}x + \\ \int_{k(z)}^\infty \Bigg(1- \bigg[\int_{h_{min}}^{h_{max}} \frac{\rho^{''}(t,x)}{(h_{max}-h_{min})} \mathrm{d}t\bigg]\Bigg) x \mathrm{d}x . 
 \end{multline}
  \begin{multline}
l_2^{''}(z)= \int_0^{k(z)} \Bigg(1- \bigg[\int_{n(z,x)}^{h_{max}} \frac{\tau^{''}(t,x)}{(h_{max}-n(z,x))}\mathrm{d}t\bigg]\Bigg)  x \mathrm{d}x + \\ \int_{k(z)}^\infty \Bigg(1- \bigg[\int_{h_{min}}^{h_{max}} \frac{\tau^{''}(t,x)}{(h_{max}-h_{min})} \mathrm{d}t\bigg]\Bigg) x \mathrm{d}x . 
 \end{multline}
\begin{multline}
l_3^{''}(z)= \int_{l(z)}^{k(z)} \Bigg(1- \bigg[\int_{n(z,x)}^{h_{max}} \frac{\rho^{''}(t,x)}{(h_{max}-n(z,x))}\mathrm{d}t\bigg]\Bigg)  x \mathrm{d}x + \\ \int_{k(z)}^\infty \Bigg(1- \bigg[\int_{h_{min}}^{h_{max}} \frac{\rho^{''}(t,x)}{(h_{max}-h_{min})} \mathrm{d}t\bigg]\Bigg) x \mathrm{d}x .  
 \end{multline}
 \begin{multline}
l_4^{''}(z)= \int_{l(z)}^{k(z)} \Bigg(1- \bigg[\int_{n(z,x)}^{h_{max}} \frac{\tau^{''}(t,x)}{(h_{max}-n(z,x))}\mathrm{d}t\bigg]\Bigg)  x \mathrm{d}x + \\ \int_{k(z)}^\infty \Bigg(1- \bigg[\int_{h_{min}}^{h_{max}} \frac{\tau^{''}(t,x)}{(h_{max}-h_{min})} \mathrm{d}t\bigg]\Bigg) x \mathrm{d}x .  
 \end{multline}
 
$U'(z,\gamma,b)$ =$\exp{\bigg(\frac{-\gamma \sigma_N b }{P_u K z^{-\alpha_N}}\bigg)}$, $k(z)$=$\sqrt{z^2-h_{min}^2}$, $n(z,x)$=$\sqrt{z^2-x^2}$, $l(z)$=$\sqrt{z^2-h_{max}^2}$.
\begin{multline}
\rho^{''}(t,x)=
L_S(\sqrt{x^2+t^2},t) \Big(\frac{1}{1+\frac{\gamma (\sqrt{x^2+t^2})^{-\alpha_L}}{z^{-\alpha_N}}}\Big)^b + \\
N_S(\sqrt{x^2+t^2},t) \Big(\frac{1}{1+\frac{\gamma (\sqrt{x^2+t^2})^{-\alpha_N}}{z^{-\alpha_N}}}\Big)^b.  
\end{multline}
\begin{multline}
\tau^{''}(t,x)=
L_S(\sqrt{x^2+t^2},t) \Big(\frac{m}{m+\frac{\varepsilon \gamma (\sqrt{x^2+t^2})^{-\alpha_L}}{z^{-\alpha_N}}}\Big)^{mb} +  \\ N_S(\sqrt{x^2+t^2},t) \Big(\frac{m}{m+\frac{\varepsilon \gamma (\sqrt{x^2+t^2})^{-\alpha_N}}{z^{-\alpha_N}}}\Big)^{mb}.  
\end{multline}

\end{theorem}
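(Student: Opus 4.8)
The argument runs parallel to the proof of Theorem~\ref{Th1}, with the \ac{LoS} and \ac{NLoS} tiers exchanging roles; the genuine simplification here is that the serving link carries Rayleigh rather than Nakagami-$m$ fading, so the first averaging step is exact. First I would condition on the 3-D distance $d_N=z$ to the serving \ac{NLoS} \ac{UAV-BS}. Because the maximum-RSSI rule selects an \ac{NLoS} UAV-BS only when every \ac{LoS} UAV-BS offers a strictly smaller \ac{RSS}, the relevant serving-distance measure is the \ac{NLoS} nearest-neighbour density $f_{d_N}(z)$ of (\ref{cdf_z2}) multiplied by the probability that no \ac{LoS} UAV-BS beats it, which is precisely the exponential term appearing inside the integrand of $A_N$ in (\ref{ass1}); integrating this product over the two altitude regimes $h_{min}\le z<h_{max}$ and $z\ge h_{max}$ — the split inherited from the uniform altitude marks — yields the two-term outer integral of (\ref{md2}), the $[h_{min},h_{max})$ piece carrying the $A_N$ factor.

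Next, fixing $z$ and writing $s=\gamma/(P_uK z^{-\alpha_N})$, I would use $g_N\sim\mathrm{Exp}(1)$ in (\ref{csp_3}) to obtain
\begin{equation}
P_{SN}(\gamma)=e^{-s\sigma_N}\,\mathbb{E}_{g_N'}\!\big[e^{-sI_N'}\mid\Phi_U,\Phi_b\big]\,\mathbb{E}_{G_L}\!\big[e^{-sI_L}\mid\Phi_U,\Phi_b\big],
\end{equation}
that is, a product — deterministic given $(\Phi_U,\Phi_b)$ — of per-interferer Laplace transforms, where a UAV-BS at horizontal distance $x$ and altitude $t$ is \ac{LoS} with probability $L_S(\sqrt{x^2+t^2},t)$ (pathloss exponent $\alpha_L$, Nakagami-$m$ fading) and \ac{NLoS} with the complementary probability $N_S$ (exponent $\alpha_N$, Rayleigh fading). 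Raising to the $b$-th power factors out the noise term $U''(z,\gamma,b)=\exp\!\big(-\gamma\sigma_N b/(P_uK z^{-\alpha_N})\big)$ and leaves, per interferer, a factor $(1+\cdot)^{-b}$ for a Rayleigh contribution and $(m/(m+\cdot))^{mb}$ for a Nakagami one; bounding the Gamma \ac{CCDF} by the Alzer-type inequality converts the latter into $(m/(m+\varepsilon\,\cdot))^{mb}$ with $\varepsilon=m(m!)^{-1/m}$. Collecting the two alternatives at a generic interferer location gives exactly the kernels $\rho''(t,x)$ and $\tau''(t,x)$ of the theorem.

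I would then average over $\Phi_U$ — with its uniform altitude marks and the independent \ac{LoS}/\ac{NLoS} thinning — by the \ac{PGFL} of the \ac{PPP}, pass to polar coordinates in the horizontal plane (factor $2\pi x\,\mathrm{d}x$), and average each interferer altitude over its uniform law. The association constraint truncates that altitude range: an interferer must lie beyond the serving \ac{UAV-BS} in the \ac{RSS} sense, so for $x<k(z)=\sqrt{z^2-h_{min}^2}$ one needs $t\ge n(z,x)=\sqrt{z^2-x^2}$; for $x\ge k(z)$ the full range $[h_{min},h_{max}]$ survives; and when $z\ge h_{max}$ the radial integral begins at $l(z)=\sqrt{z^2-h_{max}^2}$. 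These breakpoints turn the two \ac{PGFL} exponents into the four radial integrals $l_1''(z),\dots,l_4''(z)$, hence $A''(z)$ and $B''(z)$ of (\ref{inn_3})--(\ref{inn_4}); de-conditioning against $f_{d_N}(z)$ over the two altitude regimes and reinstating the $A_N$ weight then produces (\ref{md2}).

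The main obstacle is the bookkeeping of the association/exclusion region under a maximum-\ac{RSS} rule with two tiers of different pathloss exponents and fading laws: since the serving \ac{UAV-BS}'s \ac{RSS} depends on its own \ac{LoS}/\ac{NLoS} state, and so does every interferer's, the exclusion region couples $z$, the interferer horizontal distance $x$, and the interferer altitude mark $t$, which is what forces both the altitude averaging (the $n(z,x)$ versus $h_{min}$ split) and the radial integration (the $k(z)$ and $l(z)$ breakpoints) to be piecewise. The secondary delicate point is keeping the Nakagami-$m$ \ac{LoS} interferers tractable inside the $b$-th moment — this is what necessitates the Alzer-type bound and the $\varepsilon$-parametrisation, so that $\tau''$ inherits the same closed form as $\rho''$ and the whole expression collapses to the product of exponentials in (\ref{md2}).
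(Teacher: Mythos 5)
Your proposal is correct and follows essentially the same route as the paper, which proves this theorem simply by invoking the Theorem~1 argument with the LoS and NLoS roles exchanged: condition on the serving distance, use the (now exact) exponential fading CCDF, take the blockage expectation per interferer to obtain the $\rho''/\tau''$ kernels with the Alzer-type $\varepsilon$ factor for Nakagami interferers, average the truncated uniform altitude marks, apply the PGFL in polar coordinates, and de-condition against $f_{d_N}(z)$ with the $A_N$ weight. The only cosmetic mismatch is your opening description of the serving-distance measure (density times LoS-exclusion probability), which is tidier than what the theorem actually uses ($f_{d_N}$ with a constant $A_N$ factor, exactly as in the paper's final step), but your assembly in the last paragraph reverts to the paper's convention, so nothing substantive is affected.
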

The proof is the same as Theorem \ref{Th1}; therefore, we skip it for brevity. 

The \ac{MD} of SINR provides a probabilistic framework to analyze the reliability of communication links in a wireless network.
For an ergodic point process in \ac{UAV} networks, the \ac{MD} $\bar{F}_{P_{S}}(\gamma, x_r)$ can be interpreted as the proportion of active links where the success probability $\mathcal{P}_S(\gamma)$ for a given value of threshold $\gamma$ exceeds the reliability threshold $x_r$. The exact \ac{MD} can be obtained from the Gil-Pelaez theorem~\cite{3} with $b^{th}$ moment $M_{b}$ of $\mathcal{P}_S(\gamma)$, $b \in \mathbb{R}$, $i = \sqrt{-1}$.
\begin{equation}
    \bar{F}_{P_{S}}(\gamma, x_r)= \frac{1}{2} + \frac{1}{\pi} \int_0^\infty \frac{\mathfrak{I}(e^{-it \log x_r } M_{it})}{t} \mathrm{d} t,
    \label{md}
\end{equation}
where $\mathfrak{I}(z)$ is the imaginary part of $z \in \mathbb{C}$. $M_{it}$ is the $it^{th}$ moment of $\mathcal{P}_S(\gamma)$. 

From (\ref{md1}) and (\ref{md2}), the $b^{th}$ moment of the  $\mathcal{P}_S(\gamma)$ given as
\begin{equation}
    M_b(\gamma)=A_L M_{L}(\gamma) +A_N M_{N}(\gamma).
    \label{eq9}
\end{equation}
The \ac{SINR} \ac{MD} is obtained by substituting (\ref{eq9}) in (\ref{md}).

The first moment of $\mathcal{P}_S(\gamma)$, substituting $b=1$ in (\ref{eq9}), gives the overall SINR coverage probability experienced by the user. The $(-1)^{th}$ moment of $\mathcal{P}_S(\gamma)$ is referred to as \ac{MLD} $M_{-1}(\gamma)$, represents the average number of retransmission attempts required to successfully transmit a packet between the UAV-BS and the user.

\section{Cache-Enabled HO Analysis}
\label{HO_AT}


In this section, we discuss an efficient \ac{HO} management scheme by leveraging the caching capabilities of the \ac{UE}. In our work in \cite{9}, we discuss HO management using caching in a \ac{1-D} network and analyze the impact of handovers on network performance and delay. Here, we extend this to a UAV-based network, with locations of \acp{UAV} modeled as a 2-D \ac{MPPP}, in the presence of blockages. This HO management scheme reduces handovers in the network, thus reducing the latency in transmission in the presence of mobile users. We perform a semi-analytical analysis of the network to derive the average throughput experienced by the user in a cache-enabled HO scheme.
Without loss of generality, we perform a downlink analysis from the perspective of a mobile user
moving in a straight line through the origin along the x-axis. 

We consider a scenario where a user moves over a long period, denoted as $T$. At $t=0$, the user will be associated with a UAV-BS at $(x_{a},y_{a},h_{a})$ and experience a download rate of $d_r(t)$. The download rate experienced by the user associated with an \ac{LoS} \ac{UAV-BS} at time $t$ is given as
\begin{equation}
d_r(t)= B \log_2\Big(1+ \frac{P_u K G_L d_{mu}(t)^{-\alpha_L}}{\sigma_N + I_L'+I_N}\Big),
\end{equation}
where $I_L'$ and $I_N$ are the interfering strengths from the other \ac{LoS} and \ac{NLoS} UAV-BSs respectively. $d_{mu}(t)$ is the distance between the associated \ac{BS} and the user at time $t$, given as $
    d_{mu}(t)= \sqrt{(x_{a}-vt)^2 + (y_{a}-y_{u})^{2}+ h_{a}^{2}}$.
    
    We assume the user is moving along the \ac{2-D} plane in a straight line with velocity $v$. $G_L$ is the Nakagami-m fading factor, $K$ is the pathloss coefficient i.e., $K=(\frac{(\lambda_c)}{4\pi})^2$, where $\lambda_c$ is the carrier wavelength. If it is \ac{NLoS} \ac{UAV-BS}, the path-loss and fading parameters change.

Consider the scenario where $d_r > s_r$. As the user moves forward, the user experiences a download rate greater than 
$s_r$. Consequently, the excess data (i.e., the difference between the downloaded data and the required data for the service) is cached in the \ac{UE} memory of cache size $G$. The caching continues as long as the user's download rate is equal to or exceeds the required service rate and there is available space in the cache. Let 
$t_d$ represent the time at which the download rate falls below the service's required rate.  This time, referred to as the caching time, marks when caching occurs at the \ac{UE}, given the download rate $d_r$ is greater than $s_r$. In the presence of blockages, this time is discontinuous, making its analytical expression intractable. The downloaded data till time $t_d$ is given as $C_D=\int_0^{t_d} d_r(t) \hspace{0.1cm} \mathrm{d}t.$

The amount of data used by the user to meet the service requirement till time $t_n$ is given as $s_r t_n$. The user requires
$\Delta T$ seconds to measure the \ac{RSSI} from the neighboring cells. If there is sufficient data cached, i.e., more than $s_r \Delta T$, to skip \acp{HO} and perform \ac{RSS} measurements, the cell search is muted, and \acp{HO} are skipped. The user remains associated with the previous \ac{UAV-BS} until the next cell search is initiated. While utilizing cached data $C_D$, the user achieves the specified service rate $s_r$. Let the time be $t_n$, which the user can use the cached data to maintain the service requirement rate. 
If there is $C_D$ data in the cache, the next cell search is triggered when the condition $C_D/s_r < \Delta T$ is met. The user maintains association with the current \ac{UAV-BS} until this condition is satisfied or the download rate falls below the minimum threshold defined by 3GPP to sustain a connection. Once this occurs, the service rate drops to the current download rate $d_r(t)$, prompting a cell search, \ac{RSS} measurements, and the initiation of a handover. 
Using this methodology, the number of handovers experienced by the user in the cache-enabled HO scheme is given as $H_N$. By performing temporal and spatial averaging, the average HO rate is given as $\mu= \frac{H_N}{T}$. Performing spatial averaging, the average HO rate is given as $\mu'$. The throughput experienced by the user when utilizing the cached data is $s_r$. Considering the entire time of travel $T$, the average throughput at the time of utilizing cached data is given as
$\frac{s_r t_n}{T}$. Performing spatial averaging, the average throughput experienced by the user when utilizing the cached data is given as $R_s$. While the user experiences the download rate $d_r$ less than the service rate, we derive the average throughput by utilizing the first moment of \ac{CSP} $M_1(\gamma)$ derived in (\ref{eq9}).
\begin{equation}
    R_d=B \int_0^{2^{\frac{s_r}{B}}-1} \frac{M_1(k)}{k+1} \mathrm{d} k.
\end{equation}
Here, the maximum SINR experienced by the user will be $2^{\frac{s_r}{B}}-1$, which is the SINR experienced by the user when the cached data is fully exhausted before initiating the cell search.

Therefore, the overall average throughput experienced by the user is given as $R_a'= R_s+R_b$.



\subsection{Effective Average Rate}
The effective average throughput experienced by the user is given as~\cite{9}
\begin{equation}
R_{\rm{eff}}=R_{a}'(1-\mu' t_H)^+ ,
\end{equation}
where $\mu'$ is the average handover rate. The time overhead for each handover $t_H$ is 43 ms~\cite{8}. The delay experienced by the user due to the HOs is expressed as $\mu' t_H$.

\section{Numerical Results and Discussions}
\label{NM_D}

In this section, we validate the theoretical model through Monte Carlo simulations and provide numerical results to discuss the key characteristics of the network. 
The parameters are $P_{u}$= 10 W~\cite{10}, $h_{min}=100$ m, $h_{max}=300$ m, $B$= 100 MHz~\cite{11}, $\alpha_L=2$, $\alpha_N=4$, $v$= 1 m/s, $f_c= 3.5$ GHz, $\lambda_b=10^{-6}$ km$^{-2}$, $s_r=40$ Mbps, $t_H= 43$ ms, $T= 5000$ s, $t_s=20$ ms.

   
In Fig.~\ref{MD_1}, we plot the \ac{SINR} \ac{MD} as a function of the intensity of \acp{UAV-BS} across various blockage scenarios, with a reliability threshold of $x_r = 0.9$, indicating that 90\% of the communication links are active. In low-blockage scenarios, for an intensity of $\lambda_u = 10^{-6}$, the \ac{SINR} MD value is 0.7. This implies that the probability of achieving the reliability threshold, where 90\% of the communication links between the user and the \acp{UAV-BS} are active, is 70\%. However, in scenarios with higher blockages, users achieve the reliability threshold of 90\% active links less than 55\% of the time. Contrary to popular belief, regions with higher blockages benefit from a denser deployment of \acp{UAV-BS}, as the blockages help reduce interference from other \acp{UAV-BS}. We also observe that increasing the network density does not improve performance beyond a certain limit. \begin{figure}[htbp]
   \begin{minipage}[t]{0.45\textwidth}
    \centering
    \includegraphics[width=1\linewidth,height=0.8\linewidth]{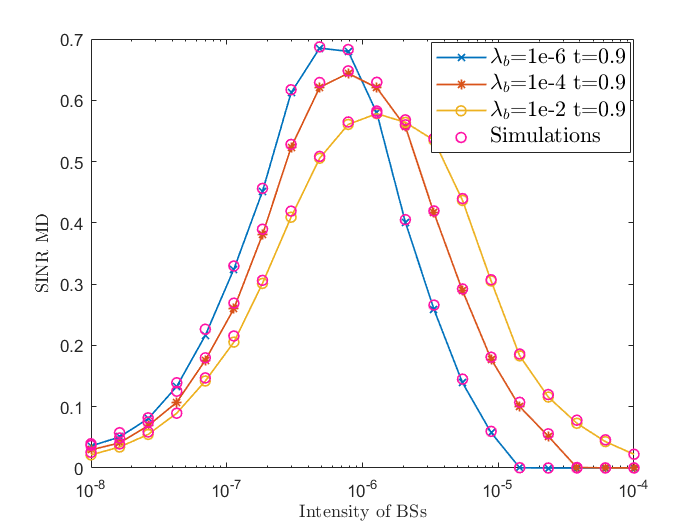}
    \caption{SINR MD as a function of the intensity of UAV-BSs for different blockage intensities for $x_r=0.9$, $\gamma=-10$ dB}
    \label{MD_1}
    \end{minipage}\hfill
    \begin{minipage}[t]{0.45\textwidth}
     \centering
    \includegraphics[width=1\linewidth]{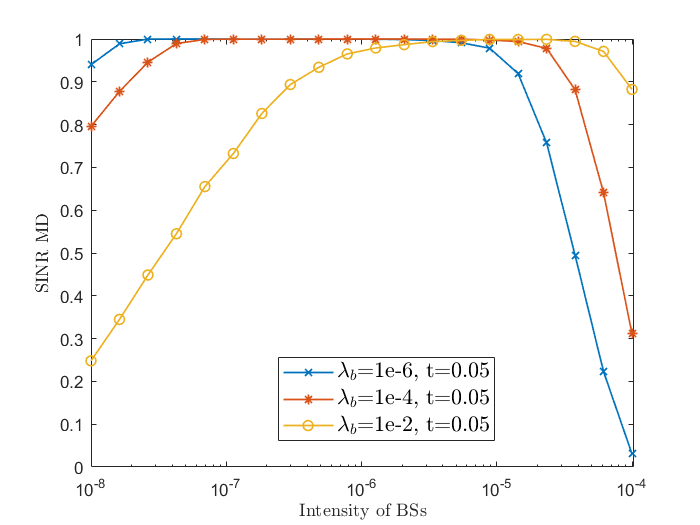}
    \caption{SINR MD as a function of the intensity of UAV-BSs for different blockage intensities for $x_r=0.05$, $\gamma= 10$ dB}
    \label{MD_2}
    \end{minipage}
    \end{figure} In Fig.~\ref{MD_2}, we plot the \ac{SINR} \ac{MD} as a function of the intensity of \acp{UAV-BS}, evaluated at a reliability threshold of 0.05. This threshold signifies that at least 5\% of the communication links are active, with the \ac{SINR} exceeding a specified value. The results reveal that areas with lower blockage require fewer \acp{UAV-BS} to sustain this level of active communication links, whereas higher blockage areas demand a denser deployment. In low-blockage scenarios where $\lambda_b=10^{-6}$, users consistently experience at least 5\% of active communication links. In contrast, in high-blockage scenarios $\lambda_b=10^{-2}$, increasing the density of \acp{UAV-BS} beyond 10 UAV-BSs per km$^{2}$ ensures that all users achieve at least 5\% of active communication links at all times.   


 \begin{figure}[htbp]
   \begin{minipage}[t]{0.45\textwidth}
    \centering
    \includegraphics[width=1\linewidth,height=0.8\linewidth]{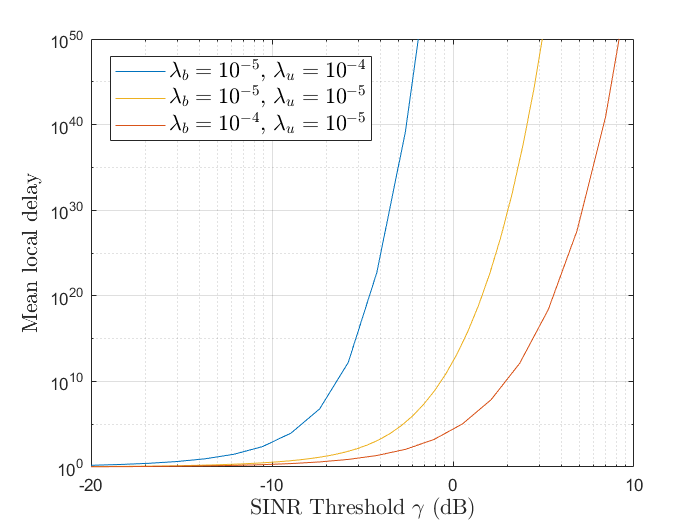}
    \caption{Mean local delay versus SINR threshold}
    \label{MLD}
    \end{minipage}\hfill
    \begin{minipage}[t]{0.45\textwidth}
     \centering
    \includegraphics[width=1\linewidth]{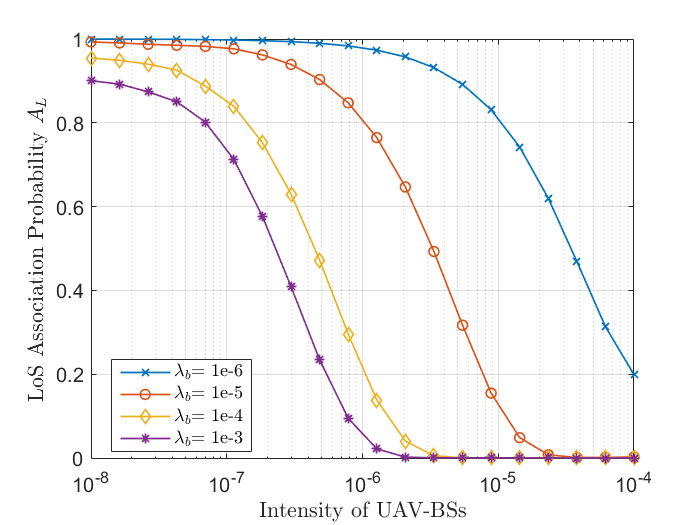}
    \caption{LoS Association Probabilities for different blockage intensities}
    \label{LOS_1}
    \end{minipage}
    \end{figure}

In Fig.~\ref{MLD}, we plot \ac{MLD} in the network for different values of SINR requirement at the users. Here, we observe for a fixed value of UAV-BS intensity of $10^{-5}$, the increase in blockage intensity leads to an increase in SINR threshold $\gamma$ because in order to maintain a consistent mean local delay while ensuring SINR constraints are met, the system must increase the SINR threshold. However, for fixed blockage intensity of $10^{-5}$, an increase in UAV-BS intensity leads to a decrease in $\gamma$. This is because the user can experience the required communication quality more easily due to better coverage.
Fig.~\ref{LOS_1} shows the \ac{LoS} association probability of the network versus the intensity of UAV-BSs. As the intensity of UAV-BSs increases, \ac{LoS} association probability decreases. For sparse deployment of UAV-BSs, \ac{LoS} association is high and then decreases as the intensity of UAV-BSs increases. In the presence of \acp{TBS} as discussed in \cite{7}, \ac{LoS} associations are fewer for lower values of intensity of BSs. Further increasing the intensity of UAV-BSs enhances the likelihood of serving \ac{NLoS} \acp{UAV-BS} without significantly impacting the \ac{LoS} links. Here, we discuss the LoS association trend in the absence of \acp{TBS}. The key insight is that network densification alone is not an effective solution for increasing the LoS association probability or enhancing the network's coverage performance. Intuitively, as blockages increase, the probability of \ac{LoS} association decreases. However, a higher density of NLoS \acp{UAV-BS} can mitigate interference, leading to an improvement in \ac{SINR}, as discussed in Fig.~\ref{MD_1} and Fig.~\ref{MD_2}.


 \begin{figure}[htbp]
   \begin{minipage}[t]{0.45\textwidth}
    \centering
    \includegraphics[width=1\linewidth,height=0.8\linewidth]{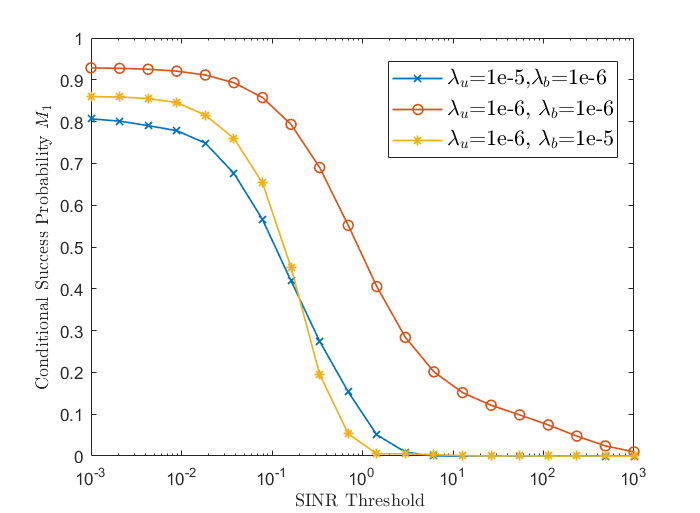}
    \caption{Conditional success probability versus \ac{SINR} threshold $\gamma$}
    \label{CSP_1}
    \end{minipage}\hfill
    \begin{minipage}[t]{0.45\textwidth}
     \centering
    \includegraphics[width=1\linewidth,height=0.8\linewidth]{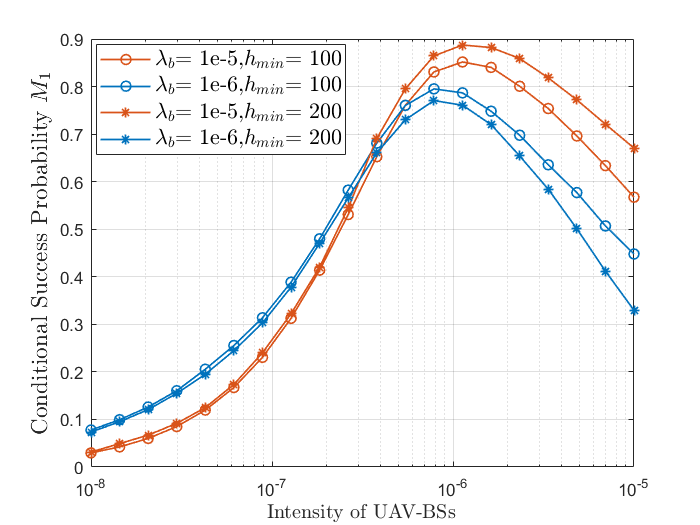}
    \caption{Conditional success probability of the network versus the intensity of UAV-BSs for different blockage intensities and UAV altitudes with $h_{max}=300$ m, $\gamma= -10$ dB}
    \label{CSP_2}
    \end{minipage}
    \end{figure}

In Fig.~\ref{CSP_1}, we observe that as the \ac{SINR} threshold increases, the \ac{CSP} decreases. For a given blockage intensity, increasing the deployment density of UAV-BSs degrades network performance due to an increase in interference. Furthermore, when user requirements are low, blockages do not significantly impact network performance, as they help block interference. However, when user requirements are high, blockages have a negative effect on performance by reducing \ac{LoS} connections, thereby impairing network reliability.  
In Fig.~\ref{CSP_2}, we plot \ac{CSP} versus intensity of UAV-BSs for different blockage intensities and minimum altitude of the UAV-BSs. As discussed, in denser deployments, higher blockage scenarios offer better performance because of the blocking of the signals interfering \acp{UAV-BS}. Additionally, we observe that in areas with fewer blockages, deploying UAV-BSs at lower altitudes enhances performance by increasing \ac{LoS} connections. In contrast, for areas with higher blockage intensity, deploying \acp{UAV-BS} at higher altitudes improves performance by establishing more \ac{LoS} links, which is critical in overcoming the effects of large blockages.


Fig.~\ref{HO_1} shows the \ac{HO} delay is analyzed with respect to the intensity of \acp{UAV-BS} for various handover strategies. The conventional HO scheme exhibits higher handover rates compared to the cache-based scheme, as the latter bypasses unnecessary handovers by leveraging cached data at the \ac{UE}. The cache-based strategy avoids initiating new associations when the cached data at the \ac{UE} is utilized, even in the presence of blockages. This reduces the overall handover delay in the network, thereby decreasing latency. However, if we increase the intensity of \acp{UAV-BS} after a certain limit, the cache-based scheme results in the same HO delay as the conventional because of the decrease in the cache data at the UE end. This reduction occurs because increased BS density leads to higher interference, lowering the user’s download rate and, consequently, reducing the cached data. 
\begin{figure}[htbp]
   \begin{minipage}[t]{0.45\textwidth}
    \centering
    \includegraphics[width=1\linewidth,height=0.8\linewidth]{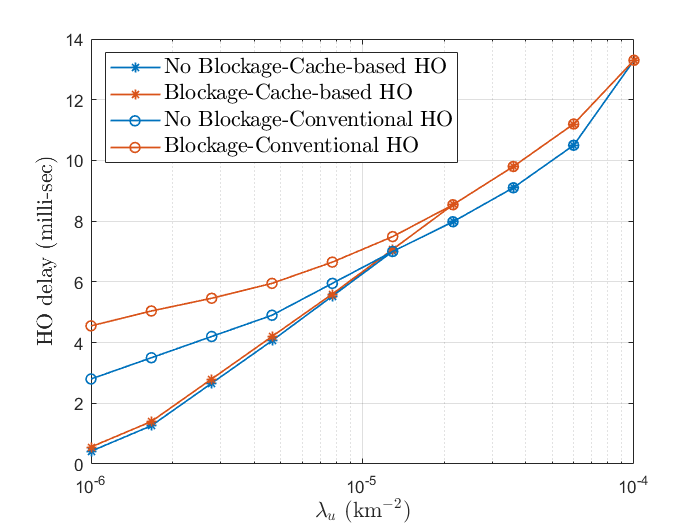}
    \caption{Handover Delay versus Intensity of UAV-BS with $s_r= 40$ Mbps, $\lambda_b= 10^{-6}$ km$^{-2}$}
    \label{HO_1}
    \end{minipage}\hfill
    \begin{minipage}[t]{0.45\textwidth}
     \centering
    \includegraphics[width=1\linewidth,height=0.8\linewidth]{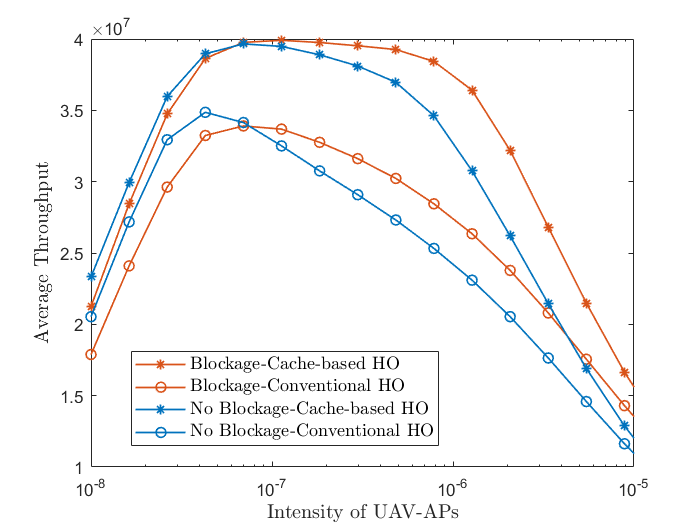}
    \caption{Effective Average Throughput versus Intensity of UAV-BS with $s_r= 40$ Mbps, $\lambda_b= 10^{-6}$ km$^{-2}$.}
    \label{HO_2}
    \end{minipage}
    \end{figure}
In Fig.~\ref{HO_2}, we plot the average throughput experienced by the UE versus the intensity of UAV-BSs. We observe that the cache-based scheme provides better performance compared to the conventional scheme. For sparse deployment of BSs, both the cache-based and conventional schemes are negatively affected by blockages. However, for dense deployments of UAV-BSs, blockages block the interference, thus improving the network performance. Nevertheless, after reaching a certain \ac{UAV-BS} deployment density, the average throughput for the cache-based scheme converges with that of the conventional scheme due to increased interference, leading to a depletion of the cached data. 

\begin{figure}[htbp]
\centering
\includegraphics[width=1\linewidth,height=0.8\linewidth]{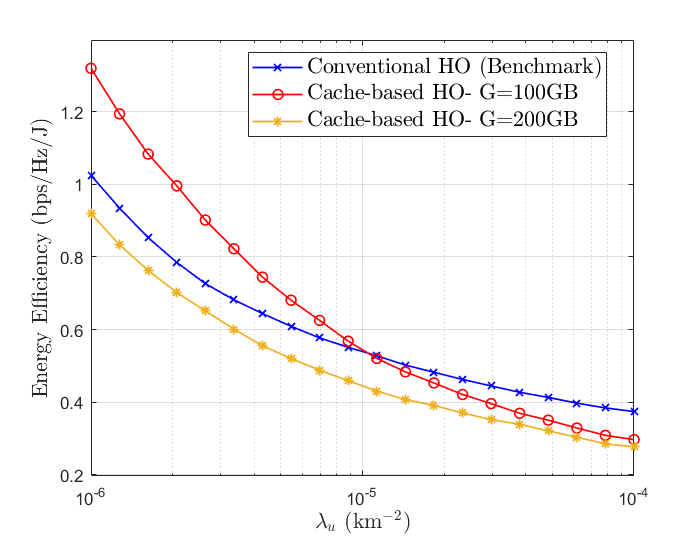}
\caption{Energy Efficiency versus Intensity of UAV-BSs with $\lambda_b= 10^{-6}$ km$^{-2}$, $s_r= 40$ Mbps}
\label{EE}
\end{figure}

In Fig.~\ref{EE}, we illustrate the variation in the energy efficiency of the system as the intensity of UAV-BSs increases, comparing conventional and cache-enabled handover (HO) schemes. The same power consumption model as in our previous work \cite{9} is adopted. For a caching capacity of 100 GB, the cache-enabled HO scheme outperforms the conventional scheme up to a UAV-BS intensity of approximately 10 UAV-BSs per $km^{2}$. This is due to the higher spectral efficiency and reduced inter-frequency \acp{HO} achieved through caching, even when accounting for the power consumption due to caching.
However, as the UAV-BS intensity increases further, higher interference levels result in a decline in spectral efficiency and an increase in the HO rate, ultimately reducing energy efficiency. For a larger cache size of 200 GB, due to the power consumption associated with caching, the cache-enabled scheme performs worse than the conventional scheme in terms of energy efficiency.  

\section{Conclusion}
\label{CL}
In this work, we studied a UAV-based network designed to meet the service rate requirements of mobile users in the presence of urban blockages. Using stochastic geometry, we modeled the network to derive the distance distribution and association probabilities for the nearest \ac{LoS} and \ac{NLoS} \acp{UAV-BS}. We observed increasing network density does not necessarily enhance \ac{LoS} associations, and blockages can improve network reliability by reducing interference. We analyze the \ac{MD} of \ac{SINR} to evaluate the network's reliability under various blockage scenarios. Additionally, we proposed a cache-enabled \ac{HO} management scheme that leverages caching at user equipment to minimize unnecessary handovers, reduce latency, and improve throughput. A complete analytical framework for the same will be addressed in the future.


\begin{appendices}
\section{Proof of Lemma 2}
The heights of the \acp{UAV} are uniformly distributed from $h_{min}$ and $h_{max}$. $d_L$ and $d_N$ are the 3-D Euclidean distance of the nearest $\ac{LoS}$ and $\ac{NLoS}$ UAV-BSs from the typical user, respectively. The cdf of $d_L$ between the typical user and nearest \ac{LoS} UAV-BS is given as
\begin{equation}
    F_{d_L}(z)= \mathbb{P}(d_L \leq z)=1-\mathbb{P}(d_L > z)
\end{equation}
The two cases are: (i) $h_{min} \leq d_L < h_{max}$ (ii) $d_L \geq h_{max}$.

For case 1, we consider 
$h_{min} \leq d_L < h_{max}$, hence the heights of the UAV-BS is constrained to  [$h_{min}$, $d_L$].  Consequently, we consider exclusively on those UAV-BSs whose heights fall within this specified range, leading to the formulation of a thinned \ac{PPP}. The intensity of thinned PPP is given as
$\lambda_u \frac{d_L-h_{min}}{h_{max}-h_{min}}$, since the heights of UAV-BSs are uniformly distributed from $h_{min}$ to $h_{max}$. This models the spatial distribution of UAV-BSs within the specified height constraints.
The 3-D Euclidean distance $d_L$ is represented as 2-D distance $r_L$ and height $h_L$, where $d_L= \sqrt{r^2+h^2}$.
Given that a UAV-BS is located at a 3-D distance $d_L$ and height $h$, the probability that this UAV-BS is in line-of-sight (LoS) is denoted by $L_S(d_L,h)$. To determine the total probability of LoS with respect to the 2-D distance $r$ between the user and the UAV-BS, the analysis requires integrating the dimensions of
$L_S(\sqrt{r^2+h^2},h)$ over the polar coordinates 
$r$ and $\theta$ covering a circular region of radius 
$r$. This gives the 2-D area where the probability of a UAV-BS being in LoS is taken into account within the circular region. To account for the variations in $h$, which is uniformly distributed from $h_{min}$ to $h_{max}$, an expectation over $h$ is taken. This provides the effective area within the 3-D space where there is a possibility of the UAV-BS being in \ac{LoS} from the user. This quantifies the spatial region in which LoS conditions are likely to be met, considering both the 2-D distribution of locations of \acp{UAV-BS} and their random heights.

Applying the proper limits for $r$, $h$ and $\theta$, the void probability is given as
\begin{multline}
    \mathcal{V}(d_L)= \exp\bigg(-2 \pi \lambda_{u} \frac{(d_L-h_{min})}{(h_{max}-h_{min})} \\ \int_{h_{min}}^{d_L} \frac{1}{(d_L-h_{min})}\int_0^{\delta(d_L,h)} L_S(\sqrt{r^2+h^2},h) r \mathrm{d} r \hspace{0.1cm} \mathrm{d} h \bigg)
\end{multline}
where $\delta(d_L,h)={\sqrt{d_L^2-h^2}}$. The inner integral can be represented as $\mathcal{L}(d_L,h)= \int_0^{\delta(d_L,h)} L_S(\sqrt{r^2+h^2},h) r \mathrm{d} r$.

Therefore,  the CDF is given as $F_{d_L}(z)= \bigg(1-\mathcal{V}(z)\bigg)/B_L$.

$B_L$ is the probability that there is at least one \ac{LoS} UAV-BS given in (\ref{bl1}). The derivation follows the same as in \cite{34}.

For case 2, where $d_L$ is greater than or equal to $h_{max}$, the heights of UAV-BSs can be anywhere between $h_{min}$ and $h_{max}$. 
Given that the altitudes of the UAVs are uniformly distributed from $h_{min}$ to $h_{max}$,  $\mathbb{P}(h_{min} \leq h \leq h_{max}) = 1$. Consequently, the entire spatial distribution of UAV-BS within the height of $h_{min}$ to $h_{max}$ are considered. By calculating the effective area where there is a possibility of being in \ac{LoS} from the user, considering the 2-D locations and random heights, the CDF $F_{d_N}(z)$ can be represented as
\begin{multline}
   F_{d_N}(z)= \bigg[1-\exp\bigg(-2 \pi \lambda_{u} \int_{h_{min}}^{h_{max}} \frac{1}{(h_{max}-h_{min})}\\    \int_0^{\delta(z,h)} L_S(\sqrt{r^2+h^2},h) r \mathrm{d} r \hspace{0.1cm} \mathrm{d} h \bigg)\bigg]/B_L
\end{multline}

Similarly, the distance distribution of nearest \ac{NLoS} UAV-BS $F_{d_N}(z)$ can be derived. If $R$ = 0, then $\eta$ = 0, and the inner integral $\int_0^{\delta(z,h)} L_S(\sqrt{r^2+h^2},h) r \mathrm{d} r \hspace{0.1cm} \mathrm{d} h = \frac{(z^2-h^2)}{2}$.

\section{Proof of Lemma 3}
The probability of associating to an \ac{NLoS} UAV-BS can be derived as
\begin{gather}
    \mathbb{P}(P_u d_N^{-\alpha_N} > P_u d_L^{-\alpha_L})
    = \mathbb{P}(d_L > d_N^{\frac{\alpha_N}{\alpha_L}})
    \end{gather}
The event in which there is at least one \ac{LoS} UAV-BS is given as $E_1$. The event that there are no \ac{LoS} UAV-BSs is given as $E_2$. The event that there is atleast one \ac{NLoS} UAV-BSs is given as $E_3$. Based on our blockage model, $\mathbb{P}(E_3)=1$.

The probability that the other events occur is given below:
\begin{inparaenum}[(i)]
   \item $\mathbb{P}(E_1)= B_L$
    \item $\mathbb{P}(E_2)= (1-B_L)$.
\end{inparaenum}
$B_L$ is given in (\ref{bl1}).

Therefore, the user getting associated with an \ac{NLoS} UAV-BS can occur in two possible scenarios:
\begin{itemize}
    \item    
    The joint probability of having no LoS UAV-BSand at least one NLoS UAV-BS, the probability of associating to NLoS UAV-BS is given as $   \mathbb{P}(E_3)\mathbb{P}(E_2)$.    
    \item The joint probability of having at least one LoS UAV-BS and at least one NLoS UAV-BS, and under those conditions, the probability of associating to NLoS UAV-BS is given as
    \begin{equation}
       \mathbb{P}(E_3)\mathbb{P}(E_1) \Big[\mathbb{P}(d_L > d_N^{\frac{\alpha_N}{\alpha_L}}|E_3,E_1)\Big] 
    \end{equation}
\end{itemize}

The probability of associating to an \ac{NLoS} UAV-BS for scenario 1 is given as $A_N^{1}= (1-B_L)$. 

For scenario 2, we apply the \ac{CCDF} for $d_L$ from (\ref{cdf_z1}) and take an expectation over $d_N$.

We know the path loss in \ac{NLoS} links is much higher than that of \ac{LoS} links. Therefore, $\alpha_N$ $>>$ $\alpha_L$. 
The distance to the nearest \ac{NLoS} UAV-BS $d_N$ varies from $h_{min}$ to $\infty$. Therefore, the values of $d_L$ are always greater than $h_{max}$. Hence, the height of \ac{LoS} UAV-BS always vary from $h_{min}$ to $h_{max}$, since $\alpha_N$ is greater than $\alpha_L$.

Given that there exists at least one LoS UAV-BS and one NLoS UAV-BS, the distance to the LoS UAV-BS $d_L$ is greater than $d_N^{\frac{\alpha_N}{\alpha_L}}$if and only if there are no LoS UAV-BSs inside the 3-D region $\mathcal{B}$ with radius $\sqrt{d_N^{\frac{2\alpha_N}{\alpha_L}}-h_a^2}$ and height $h_a$.

Therefore, taking the CCDF of $d_L$ from (\ref{cdf_z1}).
\begin{equation}
\mathbb{P}(d_L >d_N^{\frac{\alpha_N}{\alpha_L}}|E_3,E_1)= 1- F_{d_L}(d_N^{\frac{\alpha_N}{\alpha_L}})
\end{equation}
\begin{multline}
\frac{\bigg[\exp{\Big(A  \times \int_{h_{min}} ^ {h_{max}} \mathcal{L}(d_N^{\frac{\alpha_N}{\alpha_L}},h_a) \hspace{0.1cm} \mathrm{d}h_a \Big)}-(1-B_L)\bigg]}{B_L}
\end{multline}
where $A=\frac{-2\lambda_u \pi}{(h_{max}- h_{min})}$, $\mathcal{L}(z,h)$ is given in (\ref{bb1}).

Therefore, the probability of associating to an \ac{NLoS} UAV-BS for scenario 2 is given as
\begin{equation}
   A_N^{2}(d_N)= B_L
\frac{\bigg[\exp{\Big(A  \times \int_{h_{min}} ^ {h_{max}} \mathcal{L}(d_N^{\frac{\alpha_N}{\alpha_L}},h_a) \hspace{0.1cm} \mathrm{d}h_a \Big)}-(1-B_L)\bigg]}{B_L}
\end{equation}
Combining both scenarios, the probability of associating with an NLoS UAV-BS is given as
\begin{equation}
    A_N'(d_N)=A_N^{1} + A_N^{2}(d_N)
\end{equation}
\begin{multline}
    A_N'(d_N)= (1-B_L)+ \exp{\bigg(A  \times \int_{h_{min}} ^ {h_{max}} \mathcal{L}(d_N^{\frac{\alpha_N}{\alpha_L}},h_a) \hspace{0.1cm} \mathrm{d}h_a \bigg)} \\
    -1+B_L
\end{multline}
\begin{equation}
    =\exp{\bigg(A  \times \int_{h_{min}} ^ {h_{max}} \mathcal{L}(d_N^{\frac{\alpha_N}{\alpha_L}},h_a) \hspace{0.1cm} \mathrm{d}h_a \bigg)}
\end{equation}    

Further, taking an expectation over $d_N$,
\begin{multline}  A_N=\int_{h_{min}}^{h_{max}}\bigg[\exp{\bigg(A  \times \int_{h_{min}} ^ {h_{max}} \mathcal{L}(r_1^{\frac{\alpha_N}{\alpha_L}},h_a) \hspace{0.1cm} \mathrm{d}h_a \bigg)}\bigg] \\ f_{d_N}^{'}(r_1) \mathrm{d} r_1 +\int_{h_{max}}^{\infty} \bigg[\exp{\bigg(A  \times \int_{h_{min}} ^ {h_{max}} \mathcal{L}(r_2^{\frac{\alpha_N}{\alpha_L}},h_a) \hspace{0.1cm} \mathrm{d}h_a \bigg)}\bigg]  \\ f_{d_N}^{''}(r_2) \mathrm{d} r_2 
\end{multline}
where $f_{d_N}^{'}(r_1)$ and $f_{d_N}^{''}(r_2)$ are the derivative of CDF at (\ref{cdf_z2}).




\section{proof of Theorem 1}
The conditional success probability of associating to an \ac{LoS} UAV-BS is given as
\begin{gather}
    P_{SL}(\gamma)= 
    \mathbb{P}\bigg(\frac{P_u K g_l d_L^{-\alpha_L}}{\sigma_N + I_L'+I_N} > \gamma | \Phi_U,\Phi_b\bigg)
\end{gather}
where $I_L'$ and $I_N$ are the interfering strengths from the other \ac{LoS} and \ac{NLoS} UAV-BSs respectively, where $I_L'= \sum_{i:\textbf{X}_i \in  \Phi_L'} P_u K G_L^{'} d_L'^{-\alpha_L}$ and $I_N= \sum_{i:\textbf{X}_i \in  \Phi_N }P_u K g_N d_N^{-\alpha_N}$.

Conditioning on $\Phi_U$ and $\Phi_b$, we apply the \ac{CCDF} of the exponentially random variable $g_l$ and take the expectation over $g_l'$.
\begin{multline}
    P'_{SL}(\gamma) = \exp{\bigg(\frac{-\gamma \sigma_N}{P_u K d_L^{-\alpha_L}}\bigg)}  \prod_{i:\textbf{X}_i \in  \Phi_N}\bigg(\frac{1}{1+\frac{\gamma d^{-\alpha_N}_{N}}{d_L^{-\alpha_L}}} \bigg) \\ \prod_{i:\textbf{X}_i \in  \Phi_L'} \bigg(\frac{m}{m+\frac{\varepsilon \gamma d'^{-\alpha_L}_{L}}{d_L^{-\alpha_L}}}\bigg)^m
\end{multline}
The $b^{th}$ moment can be given as $M_{bL}{''}(d_L)= \mathbb{E}_{\Phi_b,\Phi_U}[(P'_{SL}(\gamma))^b]$. The product is taken over the points in the PPP $\Phi_U$. Therefore, we first simplify the expression by taking the expectation over $\Phi_b$ inside the product and then taking the expectation over $\Phi_U$.  The final expression, which involves expectations over both $\Phi_b$ and $\Phi_U$ is given as
\begin{multline}
    M_{bL}^{''}(d_L)= \exp{\bigg(\frac{-\gamma \sigma_N b }{P_u K d_L^{-\alpha_L}}\bigg)}\\ \mathbb{E}_{\Phi_N}\bigg[\prod_{i:\textbf{X}_i \in  \Phi_N}\mathbb{E}_{\Phi_b} \bigg(\frac{1}{1+\frac{\gamma d^{-\alpha_N}_{N}}{d_L^{-\alpha_L}}} \bigg)^{b} \bigg] \\ \mathbb{E}_{\Phi_L}\bigg[\prod_{i:\textbf{X}_i \in  \Phi_L}\mathbb{E}_{\Phi_b} \bigg(\frac{m}{m+\frac{\varepsilon \gamma d'^{-\alpha_L}_{L}}{d_L^{-\alpha_L}}}\bigg)^{mb} \bigg] 
\end{multline}
Taking the inner expectation wrt to $\Phi_b$ for the NLoS case,
\begin{multline}
L_S(\sqrt{x^2+t^2},t) \Big(\frac{1}{1+\frac{\gamma (\sqrt{x^2+t^2})^{-\alpha_L}}{z^{-\alpha_L}}}\Big)^b + \\ N_S(\sqrt{x^2+t^2},t) \Big(\frac{1}{1+\frac{\gamma (\sqrt{x^2+t^2})^{-\alpha_N}}{z^{-\alpha_L}}}\Big)^b.
\end{multline}
Taking the inner expectation wrt to $\Phi_b$ for the LoS case,
\begin{multline}
L_S(\sqrt{x^2+t^2},t) \Big(\frac{m}{m+\frac{\varepsilon \gamma (\sqrt{x^2+t^2})^{-\alpha_L}}{z^{-\alpha_L}}}\Big)^{mb} + \\ N_S(\sqrt{x^2+t^2},t) \Big(\frac{m}{m+\frac{\varepsilon \gamma (\sqrt{x^2+t^2})^{-\alpha_N}}{z^{-\alpha_L}}}\Big)^{mb},
\end{multline}
where $L_S(x,y)$ is the probability that the UAV-BS at $(x,y)$ is in LoS and $N_S(x,y)$ is the probability that the UAV-BS at $(x,y)$ is in NLoS. Next, we take an expectation over the height of the UAV-BSs for two cases of $d_L$.

Similarly, expectation over height is applied to the LoS case as well. 
Applying PGFL for the 2-D locations of the LoS \ac{UAV-BS} and NLoS UAV-BSs of the \ac{PPP} \cite{32}, the above equation can be written as (\ref{inn_1}) and (\ref{inn_2}). Taking an expectation over $d_L$ and multiplying the probability of associating to \ac{LoS} UAV-BS $A_L$ gives the $b^{th}$ moment of the \ac{CSP} of associating to \ac{LoS} \ac{UAV-BS} as given in (\ref{md1}).

\end{appendices}
\bibliography{references}
\bibliographystyle{IEEEtran}

\end{document}